\documentclass[12pt,a4paper]{article}

\marginparwidth 0pt \oddsidemargin 0pt \evensidemargin 0pt
\topmargin -1.4 cm \textheight 24.5 truecm \textwidth 16.0 truecm
\parskip 4pt

\usepackage{latexsym}
\usepackage{amsmath}
\usepackage{amssymb}
\usepackage{arydshln}

\usepackage{cite}
\usepackage{stmaryrd}
\usepackage{enumerate}

\usepackage{hyperref}

\usepackage{color}
\usepackage{lineno}
\usepackage{graphicx}
\usepackage{ae}
\usepackage{amsmath}
\usepackage{amssymb}
\usepackage{latexsym}
\usepackage{url}
\usepackage{epsfig}
\usepackage{mathrsfs}
\usepackage{amsfonts}
\usepackage{amsthm}





\def\mod{{\rm mod}}

\newtheorem{theorem}{Theorem}[section]

\newtheorem{prop}[theorem]{Proposition}
\newtheorem{lemma}[theorem]{Lemma}

\newtheorem{cor}[theorem]{Corollary}

\newtheorem{example}{Example}

\theoremstyle{definition}
\newtheorem{definition}{Definition}



\numberwithin{equation}{section} 
\allowdisplaybreaks  


\def\qed{\hfill$\Box$\vspace{12pt}}



\long\def\delete#1{}



\usepackage{xcolor}
\usepackage[normalem]{ulem}


\usepackage{tikz}
\usetikzlibrary{decorations.markings}
\tikzstyle{vertex}=[circle, inner sep=1.2pt, minimum size=3pt]
\tikzstyle{filledvertex}=[circle, draw, fill, inner sep=1.2pt, minimum size=3pt]

\newcommand{\vertex}{\node[vertex]}

\tikzstyle{directed}=[postaction={decorate,
	decoration={markings,mark=at position 0.5 with {\arrow{stealth}}}
}]

\begin{document}
\title{Laplacian state transfer in vertex complemented coronas}
\author{Qiang Wang$^{a,b}$,~Xiaogang Liu$^{a,c,}$\thanks{Supported by the National Natural Science Foundation of China (Nos. 11601431 and 11871398), the Natural Science Foundation of Shaanxi Province (No. 2020JM-099), and the Natural Science Foundation of Qinghai Province  (No. 2020-ZJ-920).}~$^,$\thanks{ Corresponding author. Email addresses: qiang47163606@mail.nwpu.edu.cn, xiaogliu@nwpu.edu.cn}   
\\[2mm]     
{\small $^a$School of Mathematics and Statistics,}\\[-0.8ex]
{\small Northwestern Polytechnical University, Xi'an, Shaanxi 710072, P.R.~China}\\
{\small $^b$College of Mathematics and Computer Science,}\\[-0.8ex]
{\small Yan'an University, Yan'an, Shaanxi 716000, P.R. China}\\
{\small $^c$School of Mathematics and Statistics,}\\[-0.8ex]
{\small Qinghai Nationalities University, Xining, Qinghai 810007, P.R. China}
}

\date{}

\openup 0.5\jot
\maketitle

\begin{abstract}

In this paper, we investigate the existence of Laplacian perfect state transfer and Laplacian pretty good state transfer in vertex complemented coronas. We prove that there is no Laplacian perfect state transfer in vertex complemented coronas. In contrast, we give a sufficient condition for vertex complemented coronas to have Laplacian pretty good state transfer.

\smallskip

\emph{Keywords:} Laplacian perfect state transfer;  Laplacian pretty good state transfer; Vertex complemented corona.

\emph{Mathematics Subject Classification (2010):} 05C50, 81P68
\end{abstract}

\section{Introduction}

Let $G$ be a graph with adjacency matrix $A_G$.  The \emph{transition matrix} of $G$ with respect to $A_G$ is defined to be
$$
H_{A_{G}}(t) = \exp(-\mathrm{i}tA_{G})=\sum_{k\ge0}\frac{(-\mathrm{i})^{k} A^{k}_{G} t^{k}}{k!}, ~ t \in \mathbb{R},~\mathrm{i}=\sqrt{-1}.
$$
Motivated by the Schr\"{o}dinger equation (see \cite[Claim 3.17]{Hall13}), Farhi and Gutmann first introduced this notion and used it as a paradigm to design efficient quantum algorithms \cite{FarhiG98}. The $(u,v)$-entry of $H_{A_{G}}(t)$ is denoted by $H_{A_G}(t)_{u,v}$, where $u, v \in V(G)$. If $u \ne v$ and $|H_{A_G}(\tau)_{u,v}|=1$, then $G$ is said to have \emph{perfect state transfer} (PST for short)  from vertex $u$ to vertex $v$ at time $\tau$.  This concept, which is very important in quantum computing and quantum information processing, was first introduced by Bose in 2003 \cite{Bose03}. Since then, the problem of characterizing graphs having PST has attracted the attention of both physicists and mathematicians. Recently, it has become clear that graphs having PST are rare. Thus, Godsil posed to study a relaxation of PST, called pretty good state transfer \cite{Godsil12}. A graph $G$ is said to have \emph{pretty good state transfer}  (PGST for short)  from vertex $u$ to vertex $v$ if for each $\varepsilon>0$, there exists a time $t$ such that $$\mid H_{A_G}(t)_{u,v}\mid \geq 1-\varepsilon.$$
 Up until now, many graphs have been proved to admit PST or PGST. For details, we refer the reader to \cite{AckBCMT16, Angeles10, Basic11, CaoCL20, CaoF21, CaoWF20, CC, chris2, Coutinho15, Coh12, Fan, LiLZZ2021, SZ, TanFC19, HZ,  Zhou14} and three surveys \cite{Godsil11, CGodsil, Godsil12}.

Let $D_G$ be the diagonal matrix with diagonal entries the degrees of vertices of $G$. Let $L_G=D_G-A_G$ denote the Laplacian matrix of $G$. If we replace $A_G$ in the definition of PST (respectively, PGST) by $L_G$, then we obtain the definition of \emph{Laplacian perfect state transfer} (respectively, \emph{Laplacian pretty good state transfer}), abbreviated to LPST (respectively, LPGST). It is known that LPST  (respectively, LPGST) is as important as PST  (respectively, PGST). However, only a few results on LPST or LPGST have been given, which are listed as follows:
\begin{itemize}
  \item  The complete graph $K_n$ with a missing link, where $n$ is multiple of $4$, has LPST \cite{SBose}.
     \item  A complete characterization of the class of threshold graphs allowing for LPST was given in \cite{KirklandS11}.
  \item  Every tree with at least three vertices has no LPST \cite{Cohl}.
  \item If a graph on $n$ vertices has LPST at time $\tau$ and $n\tau \in  2\pi \mathbb{Z}$, then its complement also has LPST at time $\tau$ \cite{AlvirDLM16}.
  \item The double cone over a graph on $n$ vertices has LPST if and only if $n \equiv 2 ~\text{(\mod ~4)}$ \cite{AlvirDLM16}.
  \item The corona of two graphs has no LPST, but it has LPGST under some mild conditions \cite{Ack}.
   \item The LPGST occurs between the extremal vertices of the path with $n$ vertices if and only if $n$ is a power of $2$. Moreover, in these cases, LPGST occurs between vertices at the $j$th and $(n-j+1)$th positions for all $j = 1, \ldots, n$ \cite{Ban}.
   \item  Sufficient conditions for edge coronas to have or not have LPST, and sufficient conditions for edge coronas to have LPGST are given in \cite{LiLZ20-1}.
   \item The $Q$-graph of an $r$-regular graph, if $r+1$ is a prime number, has no LPST, but it has LPGST under some mild conditions \cite{LiLZ20-2}.

 \item If $r + 1$ is not a Laplacian eigenvalue of an $r$-regular graph $G$, then there is no LPST in the total graph of $G$, but it has LPGST under some mild conditions \cite{LiuW2021}.

 \item Sufficient conditions for edge complemented coronas to not have LPST, and sufficient conditions for edge complemented coronas to have LPGST are given in \cite{WangL2021}.

     \item  Sufficient conditions for extended neighborhood coronas to have or not have LPST are given in \cite{LiLZ2021}.

\end{itemize}

In this paper, we investigate the existence of LPST and LPGST in vertex complemented coronas, whose definition is given in Definition \ref{hian}.

\begin{definition}
\label{hian}
Let $G$ be a graph with  vertex set $V(G)=\{v_1,\ldots,v_n\}$ and let $\overrightarrow{H}=(H_1,\ldots,H_n)$
be an $n$-tuple of graphs. The \emph{vertex complemented corona}
$G\tilde{\circ}\overrightarrow{H}$ is formed by taking the
disjoint union of $G$ and $H_1,\ldots,H_n$ with each $H_i$ corresponding to the vertex $v_i$, and then joining every vertex in $H_{i}$ to every vertex in $V(G)\setminus\{v_{i}\}$ for $i=1,2,\ldots,n$.
\end{definition}

Figure \ref{VCC-Fig-1} depicts the vertex complemented corona $P_3\tilde{\circ}\overrightarrow{H}$ with $\overrightarrow{H}=(P_1,P_2,P_1)$, where $P_n$ denotes the path on $n$ vertices.

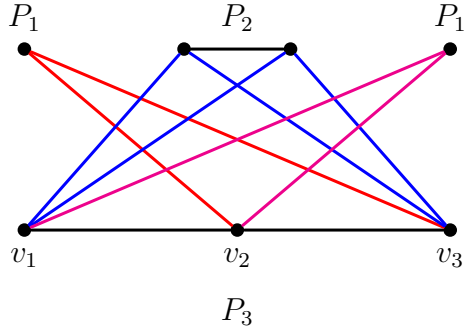
\begin{figure}
\vspace{0.5cm}
\begin{center}
\begin{tikzpicture}[x=0.70cm, y=0.6cm]
\tikzstyle{vertex}=[circle,inner sep=1.8pt, minimum size=0.1pt]

\vertex (a)[fill] at (0,0)[label=below:$v_{1}$]{};
\vertex (b)[fill] at (4,0)[label=below:$v_{2}$]{};
\vertex (c)[fill] at (8,0)[label=below:$v_{3}$]{};
\vertex (g1) at (4,-2.5)[label=above:$P_{3}$]{};
\draw[line width=.4mm,line cap=round](a)--(b);
\draw[line width=.4mm,line cap=round](b)--(c);

\vertex (d)[fill] at (0,4){};
\vertex (g2) at (0,4)[label=above:$P_{1}$]{};
\draw[line width=.4mm,line cap=round,color=red](d)--(b);
\draw[line width=.4mm,line cap=round,color=red](d)--(c);

\vertex (e)[fill] at (3,4){};
\vertex (f)[fill] at (5,4){};
\vertex (g3) at (4,4)[label=above:$P_{2}$]{};
\draw[line width=.4mm,line cap=round](e)--(f);
\draw[line width=.4mm,line cap=round,color=blue](e)--(a);
\draw[line width=.4mm,line cap=round,color=blue](e)--(c);
\draw[line width=.4mm,line cap=round,color=blue](f)--(a);
\draw[line width=.4mm,line cap=round,color=blue](f)--(c);

\vertex (g)[fill] at (8,4){};
\vertex (g4) at (8,4)[label=above:$P_{1}$]{};
\draw[line width=.4mm,line cap=round,color=magenta](g)--(a);
\draw[line width=.4mm,line cap=round,color=magenta](g)--(b);

\end{tikzpicture}
\end{center}
\vspace{-0.6cm}
\caption{An example of the vertex complemented corona}
\label{VCC-Fig-1}
\end{figure}

Our paper is organized as follows. In Section $2$, we present some lemmas that will be used later. In Section $3$, Laplacian eigenvalues and eigenprojectors of vertex complemented coronas are obtained. In Section $4$, we prove that there is no LPST in the vertex complemented coronas. In Section $5$, we give sufficient conditions for the vertex complemented coronas to have LPGST.


\section{Preliminaries}\label{Sec:pre}

In this section, we list some notations and basic results which will play important roles throughout the paper.

\begin{lemma}
\emph{(see \cite{Zhang05})}\label{tian6}
Let $M_{1}$, $M_{2}$, $M_{3}$ and $M_{4}$ be respectively $p\times p$, $p\times q$, $q\times p$ and $q\times q$ matrices with $M_{1}$ and $M_{4}$ invertible. Then
     \begin{align*}
  \det\left(
        \begin{array}{cc}
          M_{1} & M_{2} \\
          M_{3} & M_{4} \\
        \end{array}
      \right) &=\det(M_{4})\cdot\det(M_{1}-M_{2}M_{4}^{-1}M_{3}) \\
              &=\det(M_{1})\cdot\det(M_{4}-M_{3}M_{1}^{-1}M_{2}),
      \end{align*}
where $M_{1}-M_{2}M_{4}^{-1}M_{3}$ and $M_{4}-M_{3}M_{1}^{-1}M_{2}$ are called the Schur complements of $M_{4}$ and $M_{1}$, respectively.
\end{lemma}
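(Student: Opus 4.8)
The plan is to prove both identities by exhibiting the block matrix as a product of a block-triangular matrix with identity diagonal blocks and a block-triangular matrix whose diagonal blocks are exactly the pieces that appear in the formula, and then to invoke multiplicativity of the determinant together with the elementary fact that the determinant of a block-triangular matrix with square diagonal blocks is the product of the determinants of those blocks.

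For the first equality I would use that $M_4$ is invertible and verify by direct block multiplication the factorization
\[
\begin{pmatrix} M_{1} & M_{2} \\ M_{3} & M_{4} \end{pmatrix}
=\begin{pmatrix} I_{p} & M_{2}M_{4}^{-1} \\ 0 & I_{q} \end{pmatrix}
\begin{pmatrix} M_{1}-M_{2}M_{4}^{-1}M_{3} & 0 \\ M_{3} & M_{4} \end{pmatrix}.
\]
Taking determinants, the first factor is block upper-triangular with identity diagonal blocks, hence has determinant $1$, while the second factor is block lower-triangular, so its determinant is $\det(M_{1}-M_{2}M_{4}^{-1}M_{3})\cdot\det(M_{4})$; multiplicativity then yields the first formula. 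For the second equality I would symmetrically use that $M_{1}$ is invertible and check
\[
\begin{pmatrix} M_{1} & M_{2} \\ M_{3} & M_{4} \end{pmatrix}
=\begin{pmatrix} I_{p} & 0 \\ M_{3}M_{1}^{-1} & I_{q} \end{pmatrix}
\begin{pmatrix} M_{1} & M_{2} \\ 0 & M_{4}-M_{3}M_{1}^{-1}M_{2} \end{pmatrix},
\]
whose first factor has determinant $1$ and whose second factor, being block upper-triangular, has determinant $\det(M_{1})\cdot\det(M_{4}-M_{3}M_{1}^{-1}M_{2})$.

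The only non-formal ingredient is the block-triangular determinant identity $\det\!\begin{pmatrix} A & B \\ 0 & D \end{pmatrix}=\det(A)\det(D)$ for square $A$ and $D$, which I would justify by the generalized Laplace expansion along the first $p$ columns (only the minor formed by the first $p$ rows is nonzero) or simply cite it as standard. There is no genuine obstacle here; the single point that must be watched is that each of the two formulas requires the corresponding diagonal block to be invertible so that the relevant Schur complement and factorization are defined, which is exactly the hypothesis imposed on $M_{1}$ and $M_{4}$.
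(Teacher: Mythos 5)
The paper does not prove this lemma at all---it is quoted verbatim from the reference \cite{Zhang05} as a standard fact---so there is no in-paper argument to compare against. Your proof is correct and is the classical one: both block factorizations check out by direct multiplication, the unitriangular factors have determinant $1$, and the block-triangular determinant identity finishes the job; the invertibility hypotheses on $M_{1}$ and $M_{4}$ are used exactly where you say they are. Nothing further is needed.
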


The \emph{$M$-coronal} $\Gamma_{M}(x)$ of an $n\times n$ matrix $M$ \cite{kn:Cui12, kn:McLeman11} is defined to be the sum of the entries of the matrix $(xI_{n}-M)^{-1}$, that is,
\begin{equation}\label{tian1}
  \Gamma_{M}(x)=\mathbf{j}_{n}^{\top}(xI_{n}-M)^{-1}\mathbf{j}_{n},
\end{equation}
where $\mathbf{j}_{n}$ denotes the column vector of size $n$ with all entries equal to one, and $\mathbf{j}_{n}^{\top}$ denotes the transpose of $\mathbf{j}_{n}$.

\begin{lemma}\label{tian12}
\emph{(see \cite[Proposition 2]{kn:Cui12}) }
If $M$ is an $n\times n$ matrix with each row sum equal to a constant $t$, then
\begin{equation*}
  \Gamma_{M}(x)=\frac{n}{x-t}.
\end{equation*}
In particular, since for any graph $G$ with $n$ vertices, each row sum of $L_G$ is equal to $0$, we have
\begin{equation*}
   \Gamma_{L_G}(x)=\frac{n}{x}.
\end{equation*}
\end{lemma}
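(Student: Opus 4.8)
The plan is to reduce the general statement to a direct computation using the fact that $\mathbf{j}_n$ is an eigenvector of $M^\top$. Suppose $M$ has each row sum equal to the constant $t$; then $M\mathbf{j}_n = t\mathbf{j}_n$, and equivalently $\mathbf{j}_n^\top M^\top = t\,\mathbf{j}_n^\top$. The key observation is that this propagates to the resolvent: for $x$ not an eigenvalue of $M$ (so that $xI_n - M$ is invertible), we have $(xI_n - M)\mathbf{j}_n = (x-t)\mathbf{j}_n$, hence $\mathbf{j}_n = (x-t)(xI_n - M)^{-1}\mathbf{j}_n$, which gives $(xI_n-M)^{-1}\mathbf{j}_n = \tfrac{1}{x-t}\mathbf{j}_n$ whenever $x \ne t$. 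Substituting this into the defining equation \eqref{tian1} yields
\[
\Gamma_M(x) = \mathbf{j}_n^\top (xI_n - M)^{-1}\mathbf{j}_n = \frac{1}{x-t}\,\mathbf{j}_n^\top \mathbf{j}_n = \frac{n}{x-t},
\]
since $\mathbf{j}_n^\top\mathbf{j}_n = n$.

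For the special case, I would simply note that for any graph $G$ on $n$ vertices the Laplacian matrix $L_G = D_G - A_G$ satisfies $L_G\mathbf{j}_n = \mathbf{0}$: indeed the $i$th entry of $D_G\mathbf{j}_n$ is $\deg(v_i)$, while the $i$th entry of $A_G\mathbf{j}_n$ is also $\deg(v_i)$, so each row sum of $L_G$ is $0$. Applying the first part with $t = 0$ immediately gives $\Gamma_{L_G}(x) = n/x$.

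The only subtlety worth flagging is the domain of validity: the identity $\Gamma_M(x) = n/(x-t)$ holds as a rational-function identity wherever both sides are defined, i.e. away from the eigenvalues of $M$; the point $x = t$ (which is always an eigenvalue here, with eigenvector $\mathbf{j}_n$) is a genuine pole of $\Gamma_M$, consistent with the right-hand side. Since in the applications $\Gamma_{L_G}$ is used as a rational function (evaluated at symbolic $x$ or at $x$ outside the Laplacian spectrum), this causes no difficulty. There is no real obstacle here — the lemma is essentially a one-line eigenvector argument — so I would keep the write-up short.
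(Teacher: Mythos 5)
Your proof is correct and is exactly the standard one-line eigenvector argument: the paper itself states this lemma without proof, citing \cite[Proposition 2]{kn:Cui12}, and the argument there is the same observation that $M\mathbf{j}_n = t\mathbf{j}_n$ forces $(xI_n-M)^{-1}\mathbf{j}_n = \tfrac{1}{x-t}\mathbf{j}_n$. Your remark about the domain of validity (the identity holding as a rational-function identity away from the spectrum of $M$) is a sensible clarification and does not change the substance.
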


\begin{lemma}
\emph{(see \cite[Corollary 2.3]{LiuZ19})}\label{liuzhang1}
Let $\alpha$ be a real number, $A$ an $n\times n$ real matrix, $I_n$ the identity matrix of size $n$, and $J_{n}$ the $n\times n$ matrix with all entries equal to one. Then
\begin{equation*}
  \det(xI_{n}-A-\alpha J_{n})=(1-\alpha \Gamma_{A}(x))\det(xI_{n}-A).
\end{equation*}
\end{lemma}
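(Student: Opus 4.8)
The plan is to exploit the fact that $J_n = \mathbf{j}_n\mathbf{j}_n^\top$ is a rank-one matrix, so that $xI_n - A - \alpha J_n$ is a rank-one update of $xI_n - A$, and to read off its determinant from the block-matrix formula of Lemma~\ref{tian6}.

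First I would treat the generic case in which $xI_n - A$ is invertible, i.e.\ $x$ is not an eigenvalue of $A$; this covers all but finitely many real values of $x$. Form the $(n+1)\times(n+1)$ block matrix
\[
\mathcal{M} = \begin{pmatrix} xI_n - A & \alpha\,\mathbf{j}_n \\[2pt] \mathbf{j}_n^\top & 1 \end{pmatrix},
\]
and compute $\det(\mathcal{M})$ in two ways using Lemma~\ref{tian6}. Taking the Schur complement of the block $M_1 = xI_n - A$ (invertible by assumption) with $M_4 = 1$ gives
\[
\det(\mathcal{M}) = \det(xI_n - A)\cdot\bigl(1 - \alpha\,\mathbf{j}_n^\top(xI_n-A)^{-1}\mathbf{j}_n\bigr) = \bigl(1 - \alpha\,\Gamma_A(x)\bigr)\det(xI_n - A),
\]
where the second equality is just the definition~\eqref{tian1} of the $A$-coronal. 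Taking instead the Schur complement of the block $M_4 = 1$ gives
\[
\det(\mathcal{M}) = \det(1)\cdot\det\!\bigl(xI_n - A - \alpha\,\mathbf{j}_n\mathbf{j}_n^\top\bigr) = \det(xI_n - A - \alpha J_n).
\]
Equating the two expressions proves the identity for every $x$ that is not an eigenvalue of $A$.

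To finish, I would upgrade this to all $x$ by a polynomial-identity argument: both sides are polynomials in $x$ — the left-hand side obviously, and the right-hand side because multiplication by $\det(xI_n - A)$ cancels the poles of $\Gamma_A(x)$ at the eigenvalues of $A$ — and the two polynomials agree on the cofinitely many non-eigenvalues of $A$, hence they coincide. The only point requiring care is precisely this last step (checking that the right-hand side is genuinely a polynomial and handling the finitely many exceptional $x$); the algebraic heart of the argument is a one-line application of Lemma~\ref{tian6}, so I do not anticipate a real obstacle.
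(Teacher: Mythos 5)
Your proof is correct. Note that the paper itself gives no proof of this lemma---it is quoted verbatim from \cite[Corollary 2.3]{LiuZ19}---so there is no in-paper argument to compare against; your derivation is the standard one (a rank-one-update/matrix-determinant-lemma computation obtained by bordering $xI_n-A$ and evaluating the determinant via both Schur complements of Lemma~\ref{tian6}), and the final polynomial-identity step to remove the genericity assumption on $x$ is handled correctly, since $\Gamma_A(x)\det(xI_n-A)=\mathbf{j}_n^\top\,\mathrm{adj}(xI_n-A)\,\mathbf{j}_n$ is indeed a polynomial.
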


Let $G$ be a graph with its Laplacian matrix $L_G$. The eigenvalues of $L_G$ are called the \emph{Laplacian eigenvalues} of $G$. We use $\mathrm{Spec}_L(G)$ to denote the set of all distinct Laplacian eigenvalues of $G$. Suppose that $0= \lambda_0< \lambda_1<\cdots<\lambda_d$ are all distinct eigenvalues of $L_G$ and $\left\{\mathbf{x}_{1}^{(r)}, \ldots, \mathbf{x}_{l_r}^{(r)}\right\}$ is an orthonormal basis of the eigenspace associated with $\lambda_{r}$ with multiplicity $l_r$, $r=0,1,\ldots,d$.  Let $\mathbf{x}^H$ denote the conjugate transpose of a column vector $\mathbf{x}$. Then, for each eigenvalue $\lambda_r$ of $L_G$, define
$$
F_{\lambda_r} = \sum\limits_{i=1}^{l_r}\mathbf{x}_i^{(r)} \left(\mathbf{x}_i^{(r)}\right)^H,
$$
which is usually called the \emph{eigenprojector} (or orthogonal projector onto an eigenspace) corresponding to  $\lambda_r$ of $G$. Note that $\sum_{r=0}^dF_{\lambda_r}=I$ (the identity matrix). Then
\begin{equation}
\label{spect1}
L_G=L_G\sum_{r=0}^dF_{\lambda_r} =\sum_{r=0}^d\sum\limits_{i=1}^{l_r}L_G\mathbf{x}_i^{(r)} \left(\mathbf{x}_i^{(r)}\right)^H  =\sum_{r=0}^d\sum\limits_{i=1}^{l_r}\lambda_r\mathbf{x}_i^{(r)} \left(\mathbf{x}_i^{(r)}\right)^H  =\sum_{r=0}^{d}\lambda_rF_{\lambda_r},
\end{equation}
which is called the \emph{spectral decomposition of $L_G$ with respect to the distinct eigenvalues}  (see ``Spectral Theorem for Diagonalizable Matrices'' in \cite[Page 517]{MAALA}). Note that $F_{\lambda_r}^{2}=F_{\lambda_r}$ and $F_{\lambda_r}F_{\lambda_s}=\mathbf{0}$ for $r\neq s$, where $\mathbf{0}$ denotes the zero matrix. So, by (\ref{spect1}), we have
\begin{equation}\label{LSpecDec2}
H_{L_G}(t)=\sum_{k\geq 0}\dfrac{(-\mathrm{i})^{k}L_G^{k}t^{k}}{k!}=\sum_{k\geq 0}\dfrac{(-\mathrm{i})^{k}\left(\sum\limits_{r=0}^{d}\lambda_{r}^{k}F_{\lambda_r}\right)t^{k}}{k!} =\sum_{r=0}^{d}\exp(-\mathrm{i}t\lambda_{r})F_{\lambda_r}.
\end{equation}

The \emph{Laplacian eigenvalue support} of a vertex $u$ in $G$, denoted by $\mathrm{{supp}}_{L_G}(u)$, is the set of all eigenvalues $\lambda$ of $L_G$ such that $F_\lambda\mathbf{e}_u\neq \mathbf{0}$, where $\mathbf{e}_u$ is the characteristic vector corresponding to $u$. Two vertices $u$ and $v$ are \emph{strongly Laplacian cospectral} if $F_\lambda\mathbf{e}_u=\pm F_\lambda\mathbf{e}_v$ for each eigenvalue $\lambda$ of $L_G$.

The following result, which rephrases \cite[Theorem 7.3.1]{Coutinho16}, gives a necessary and sufficient condition for a graph to have LPST.

\begin{theorem}\emph{(see \cite[Theorem 2.1]{Ack})}
\label{Coutinho}
Let $u$ and $v$ be two distinct vertices in a graph $G$. Set $S=\mathrm{supp}_{L_G}(u)$. Then LPST occurs between $u$ and $v$ at time $\tau$ in $G$ if and only if all of the following hold:
\begin{itemize}
\item[\rm (a)]  Vertices $u$ and $v$ are strongly Laplacian cospectral;
 \item[\rm (b)] For each $\lambda\in S$, $\lambda$ is an integer;
  \item[\rm (c)] For each $\lambda\in S$, $\mathbf{e}_u^{\top}F_\lambda \mathbf{e}_v$ is positive if and only if $\lambda/\gcd(S)$ is even, where $\gcd(S)$ denotes the great common divisor of all elements in $S$.
\end{itemize}
Moreover, if these hold, there is a minimum time of LPST given by $$t_0:=\frac{\pi}{\gcd(S)},$$
and $\tau$ is an odd multiple of $t_0$.
\end{theorem}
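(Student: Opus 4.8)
The plan is to run everything through the spectral decomposition $H_{L_G}(t)=\sum_{r}\exp(-\mathrm{i}t\lambda_r)F_{\lambda_r}$ recorded in (\ref{LSpecDec2}), exploiting three consequences of $L_G$ being real symmetric: $H_{L_G}(\tau)$ is unitary, $H_{L_G}(\tau)$ is (complex) symmetric, and the eigenprojectors $F_{\lambda_r}$ are real matrices with $\sum_r F_{\lambda_r}=I$ and $F_{\lambda_r}F_{\lambda_s}=\delta_{rs}F_{\lambda_r}$. Throughout I would write $g=\gcd(S)$ and $t_0=\pi/g$, and quietly dispose of the degenerate case where $u$ or $v$ is isolated (then $S=\{0\}$ and no LPST is possible), so that $S$ contains a positive eigenvalue.

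For necessity, suppose LPST from $u$ to $v$ occurs at time $\tau>0$. Since the columns of the unitary matrix $H_{L_G}(\tau)$ are unit vectors, $|H_{L_G}(\tau)_{u,v}|=1$ forces $H_{L_G}(\tau)\mathbf{e}_u=\gamma\mathbf{e}_v$ for a unit scalar $\gamma$. Applying $F_{\lambda_r}$ to this identity and using orthogonality of the projectors gives $\exp(-\mathrm{i}\tau\lambda_r)F_{\lambda_r}\mathbf{e}_u=\gamma F_{\lambda_r}\mathbf{e}_v$ for every $r$: at $\lambda_r\notin S$ this yields $F_{\lambda_r}\mathbf{e}_v=\mathbf{0}$, while at $\lambda_r\in S$ it gives $F_{\lambda_r}\mathbf{e}_v=\sigma_r F_{\lambda_r}\mathbf{e}_u$ with $\sigma_r:=\gamma^{-1}\exp(-\mathrm{i}\tau\lambda_r)$ a unit scalar; since the two sides are real vectors, $\sigma_r\in\{\pm1\}$, which is exactly strong Laplacian cospectrality (a) and also shows $\mathrm{supp}_{L_G}(v)=S$. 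Reading the relation at $\lambda_r=0$ --- which lies in $S$ because $F_0\mathbf{e}_u$ is a nonzero multiple of the characteristic vector of $u$'s connected component --- gives $1=\gamma\sigma_0$, and $\sigma_0=-1$ is impossible since $F_0\mathbf{e}_u$ and $F_0\mathbf{e}_v$ cannot be negatives of each other; hence $\gamma=\sigma_0=1$. Thus $\exp(-\mathrm{i}\tau\lambda)=\sigma_\lambda\in\{\pm1\}$, so $\tau\lambda\in\pi\mathbb{Z}$ for every $\lambda\in S$, with $\sigma_\lambda=+1$ exactly when $\tau\lambda/\pi$ is even; and since $\mathbf{e}_u^{\top}F_\lambda\mathbf{e}_v=\sigma_\lambda\|F_\lambda\mathbf{e}_u\|^2$, the sign of $\mathbf{e}_u^{\top}F_\lambda\mathbf{e}_v$ equals $\sigma_\lambda$.

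The step I expect to be the crux is upgrading ``$\tau\lambda\in\pi\mathbb{Z}$ for all $\lambda\in S$'' to integrality of each $\lambda\in S$. The lever is Galois invariance of the support: writing $F_\lambda=\prod_{\mu\in\mathrm{Spec}_L(G),\,\mu\ne\lambda}(L_G-\mu I)/(\lambda-\mu)$ (valid because $L_G$ is diagonalizable), any automorphism $\alpha\in\mathrm{Gal}(\overline{\mathbb{Q}}/\mathbb{Q})$ fixes the integer matrix $L_G$ and permutes $\mathrm{Spec}_L(G)$, hence carries $F_\lambda$ to $F_{\alpha(\lambda)}$ entrywise, so $(F_\lambda)_{uu}\ne0\iff(F_{\alpha(\lambda)})_{uu}\ne0$, i.e. $\alpha(S)=S$. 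Then for nonzero $\lambda\in S$ and a conjugate $\lambda'=\alpha(\lambda)\in S$, from $\tau\lambda,\tau\lambda'\in\pi\mathbb{Z}$ we get $\lambda'=q\lambda$ with $q\in\mathbb{Q}$, so $\alpha^k(\lambda)=q^k\lambda$; since $\alpha$ has finite order on the splitting field of the characteristic polynomial, $q^n=1$ for some $n\ge1$, whence $q=\pm1$, and $q=-1$ contradicts nonnegativity of Laplacian eigenvalues together with $\lambda\ne0$ --- so $\lambda'=\lambda$, $\lambda$ has no nontrivial conjugates, thus $\lambda\in\mathbb{Q}$, and being a root of the monic integer polynomial $\det(xI-L_G)$ it lies in $\mathbb{Z}$: this is (b). By Bézout, $g=\sum_{\lambda\in S}c_\lambda\lambda$ with $c_\lambda\in\mathbb{Z}$, so $\tau g=\sum_\lambda c_\lambda(\tau\lambda)\in\pi\mathbb{Z}$, i.e. $\tau=m\,t_0$ for some integer $m$; if $m$ were even then $\tau\lambda/\pi=m\lambda/g$ would be even for every $\lambda\in S$, forcing $\sigma_\lambda\equiv+1$ and hence $\mathbf{e}_v=\sum_\lambda F_\lambda\mathbf{e}_v=\sum_\lambda F_\lambda\mathbf{e}_u=\mathbf{e}_u$, contradicting $u\ne v$; so $m$ is odd (the ``odd multiple of $t_0$'' claim), and then $\sigma_\lambda=+1\iff m\lambda/g\text{ even}\iff\lambda/g\text{ even}$, which with the sign computation above is precisely (c). As every LPST time is an odd multiple of $t_0$, the only candidate for the minimum is $t_0$.

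For sufficiency, assume (a)--(c). Then $S=\mathrm{supp}_{L_G}(u)=\mathrm{supp}_{L_G}(v)$ and $F_\lambda\mathbf{e}_v=\sigma_\lambda F_\lambda\mathbf{e}_u$ with $\sigma_\lambda\in\{\pm1\}$; from (c) together with $\mathbf{e}_u^{\top}F_\lambda\mathbf{e}_v=\sigma_\lambda\|F_\lambda\mathbf{e}_u\|^2$ one reads $\sigma_\lambda=(-1)^{\lambda/g}$, the exponent being an integer by (b). Taking $\tau=t_0=\pi/g$, for each $\lambda\in S$ we have $\exp(-\mathrm{i}\tau\lambda)=(-1)^{\lambda/g}=\sigma_\lambda$, so by (\ref{LSpecDec2}), $H_{L_G}(\tau)\mathbf{e}_u=\sum_{\lambda\in S}\sigma_\lambda F_\lambda\mathbf{e}_u=\sum_{\lambda\in S}F_\lambda\mathbf{e}_v=\mathbf{e}_v$, and symmetry of $H_{L_G}(\tau)$ then gives $H_{L_G}(\tau)_{u,v}=1$, i.e. LPST at time $t_0$. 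This also exhibits $t_0$ as attained, hence the minimum time, completing the proof.
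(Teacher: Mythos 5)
Your proof is correct, but note that the paper does not prove this statement at all: Theorem \ref{Coutinho} is imported verbatim from the literature (cited as \cite[Theorem 2.1]{Ack}, itself a rephrasing of \cite[Theorem 7.3.1]{Coutinho16}), so there is no in-paper argument to compare against. What you have written is a sound, essentially self-contained reconstruction of the standard proof from those sources: the unitary-column argument forcing $H_{L_G}(\tau)\mathbf{e}_u=\gamma\mathbf{e}_v$, the projection onto each eigenspace to extract the signs $\sigma_\lambda\in\{\pm1\}$ (hence strong cospectrality and the identification of the sign of $\mathbf{e}_u^{\top}F_\lambda\mathbf{e}_v$), the use of $0\in S$ with $F_0\mathbf{e}_u$ nonnegative to pin down $\gamma=1$, and the B\'ezout/parity bookkeeping that yields condition (c) and the odd-multiple-of-$t_0$ conclusion. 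The one place where you supply a genuinely independent ingredient is the integrality step (b): rather than invoking the ``ratio condition'' machinery as Coutinho does, you argue directly that the support is invariant under the Galois action (via the Lagrange form of $F_\lambda$ and the integrality of $L_G$), deduce $\alpha(\lambda)=q\lambda$ with $q$ a rational root of unity, and rule out $q=-1$ by nonnegativity of Laplacian eigenvalues; this is a clean shortcut that exploits the presence of $0$ in every Laplacian eigenvalue support, something not available in the adjacency-matrix setting. All the small prerequisites (nonvanishing of $F_\lambda\mathbf{e}_v$ for $\lambda\in S$, the degenerate isolated-vertex case, symmetry of $H_{L_G}(\tau)$ in the sufficiency direction) are accounted for, so I see no gap.
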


Similar to the approach Godsil et al. used in \cite{Godsil1},  we need the Kronecker's Approximation Theorem to help us to study the existence of LPGST in vertex complemented coronas.

\begin{theorem}
\emph{(see \cite[Theorem 442]{Hw})}
\label{H-W}
Let $1,\lambda_1,\ldots,\lambda_m$ be linearly independent over $\mathbb{Q}$. Let $\alpha_1,\ldots,\alpha_m$ be arbitrary real numbers, and let $\varepsilon$ be a positive real number. Then there exist  integers $l$ and  $q_1,\ldots, q_m$ such that
\begin{equation}
\label{KroApp}
| l\lambda_k-\alpha_k-q_k|<\varepsilon,
\end{equation}
for each $k=1,\ldots,m$.
\end{theorem}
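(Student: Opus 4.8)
The plan is to restate the theorem as a density statement on a torus and then settle it by a short duality argument. Write $\Lambda=(\lambda_1,\ldots,\lambda_m)$ and let $\mathbb{T}^m=(\mathbb{R}/\mathbb{Z})^m$. For a fixed integer $l$, choosing the best integers $q_k$ makes $|l\lambda_k-\alpha_k-q_k|$ equal to the distance from $l\lambda_k-\alpha_k$ to $\mathbb{Z}$; hence \eqref{KroApp} can be satisfied for some $l$ (with appropriate $q_k$) if and only if the image of $l\Lambda$ in $\mathbb{T}^m$ lies in the open box $\prod_{k=1}^m(\alpha_k-\varepsilon,\alpha_k+\varepsilon)\bmod\mathbb{Z}^m$. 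Since $\varepsilon>0$ and $\alpha_1,\ldots,\alpha_m$ are arbitrary, the theorem is equivalent to the assertion that the orbit $\{\,l\Lambda\bmod\mathbb{Z}^m:\ l\in\mathbb{Z}\,\}$ is dense in $\mathbb{T}^m$ (and it will even be enough to use $l\in\mathbb{Z}_{>0}$). So the first step is this reduction.

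For the second step, let $\overline H$ be the closure in $\mathbb{T}^m$ of that orbit; it is a closed subgroup of the compact abelian group $\mathbb{T}^m$. If $\overline H\neq\mathbb{T}^m$, then some nontrivial continuous character of $\mathbb{T}^m$ is trivial on $\overline H$, i.e.\ there is an integer vector $\mathbf h=(h_1,\ldots,h_m)\neq\mathbf 0$ with $e^{2\pi\mathrm{i}\,l\langle\mathbf h,\Lambda\rangle}=1$ for every $l$, equivalently $\langle\mathbf h,\Lambda\rangle=h_1\lambda_1+\cdots+h_m\lambda_m\in\mathbb{Z}$. Writing that integer as $q$, the relation $h_1\lambda_1+\cdots+h_m\lambda_m-q\cdot 1=0$ is a nontrivial $\mathbb{Q}$-linear dependence among $1,\lambda_1,\ldots,\lambda_m$, contradicting the hypothesis. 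Hence $\overline H=\mathbb{T}^m$, the orbit is dense, and the theorem follows from the first step.

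I expect the real content, and essentially the only genuine obstacle, to be the harmonic-analysis input just used: either the structure of closed subgroups of $\mathbb{T}^m$ (each is the product of a subtorus and a finite group, so a proper one lies in the kernel of some nonzero $\mathbf h\in\mathbb{Z}^m$), equivalently Pontryagin duality for $\mathbb{T}^m$. If one prefers to avoid that, the same step can be routed through Weyl's equidistribution criterion: for $\mathbf h\neq\mathbf 0$ the hypothesis gives $\langle\mathbf h,\Lambda\rangle\notin\mathbb{Z}$, so $z:=e^{2\pi\mathrm{i}\langle\mathbf h,\Lambda\rangle}\neq 1$ and $\bigl|\tfrac1N\sum_{l=1}^N z^l\bigr|\le\tfrac{2}{N|z-1|}\to 0$; by Weyl's criterion the sequence $(l\Lambda)_{l\ge1}$ is equidistributed in $\mathbb{T}^m$, which is strictly stronger than density. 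Either way the remaining pieces, namely the geometric-series bound and the extraction of the rational relation, are routine.

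A fully elementary alternative is induction on $m$: the base case $m=1$ is Dirichlet's classical argument that $\{\,l\lambda_1\bmod 1\,\}$ is dense when $\lambda_1$ is irrational (pigeonhole on $0,\lambda_1,\ldots,N\lambda_1\bmod 1$ produces an arbitrarily small nonzero residue, whose integer multiples then form a $\tfrac1N$-net of the circle). The inductive step, peeling off one coordinate while checking that $\mathbb{Q}$-independence survives for the reduced system, is the delicate part and is where the hypothesis is genuinely consumed; this is in essence the proof in Hardy and Wright. I would keep the torus/duality argument as the main line and record the induction only as a remark.
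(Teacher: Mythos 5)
The paper does not prove this statement at all: it is imported verbatim as Theorem 442 of Hardy and Wright, so there is no in-paper argument to compare yours against. That said, your proof is correct. The reduction to density of $\{l\Lambda \bmod \mathbb{Z}^m\}$ in $\mathbb{T}^m$ is exactly right (for fixed $l$ the optimal $q_k$ realizes the distance of $l\lambda_k-\alpha_k$ to $\mathbb{Z}$), and the duality step correctly consumes the full hypothesis: a nonzero annihilating character $\mathbf h\in\mathbb{Z}^m$ forces $h_1\lambda_1+\cdots+h_m\lambda_m=q\in\mathbb{Z}$, and it is precisely the presence of $1$ in the independent set that makes $h_1\lambda_1+\cdots+h_m\lambda_m-q\cdot 1=0$ a forbidden relation; without that $1$ the theorem is false, so it is good that your argument visibly uses it. The only nontrivial input is the classification of closed subgroups of $\mathbb{T}^m$ (or Pontryagin duality), and your Weyl-criterion detour via the geometric-series bound is a legitimate way to replace it, even proving equidistribution rather than mere density, and handling $l\ge 1$ directly. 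By contrast, the source the paper cites proves Kronecker's theorem by the elementary induction on $m$ that you sketch in your last paragraph; that route avoids any harmonic analysis but is longer and the bookkeeping of the reduced independence hypothesis is delicate. For the purposes of this paper either proof (or simply the citation, as the authors chose) suffices; if you include one, the torus/duality version is the cleanest, and you should state explicitly which fact about closed subgroups of $\mathbb{T}^m$ you are invoking.
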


Whenever we have an inequality of the form $|\alpha-\beta|<\varepsilon$ for arbitrarily small $\varepsilon$, we will write instead $\alpha\approx \beta$ and omit the explicit dependence on $\varepsilon$. For example, (\ref{KroApp}) will be represented as $l\lambda_k-q_k\approx\alpha_k$.

When we apply Kronecker's Approximation Theorem to study the existence of LPGST in vertex complemented coronas, the following result to identify sets of numbers which are linearly independent over the rationals will be involved.

\begin{theorem}
\emph{(see \cite[ Theorem 1a]{Ri}) }
\label{Ri} Let $p_1,p_2,\ldots,p_k$ be distinct positive  primes.
The set $\left\{\sqrt[n]{p_1^{m(1)}\cdots p_k^{m(k)}}: 0\leq m(i)<n,~1\leq i \leq k \right\}$ is linearly independent over the set of rational numbers $\mathbb{Q}$.
\end{theorem}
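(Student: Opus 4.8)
The plan is to derive Theorem~\ref{Ri} from the computation of a field degree. Put $\theta_i=\sqrt[n]{p_i}$ (say the positive real $n$-th root) and $K=\mathbb{Q}(\theta_1,\dots,\theta_k)$. As the $\theta_i$ are algebraic over $\mathbb{Q}$, the field $K$ equals $\mathbb{Q}[\theta_1,\dots,\theta_k]$ and is therefore spanned over $\mathbb{Q}$ by all monomials $\prod_i\theta_i^{c_i}$; since $\theta_i^{\,n}=p_i\in\mathbb{Q}^{\times}$, each such monomial is a rational multiple of one of the $n^{k}$ ``reduced'' monomials $\prod_i\theta_i^{\,m(i)}$ with $0\le m(i)<n$, which are exactly the numbers listed in the statement. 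Hence $\dim_{\mathbb{Q}}K\le n^{k}$, and that list is linearly independent over $\mathbb{Q}$ precisely when it is a basis of $K$, i.e. precisely when $[K:\mathbb{Q}]=n^{k}$. So it suffices to prove $[K:\mathbb{Q}]=n^{k}$.

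I would prove this by induction on $k$ along the tower $\mathbb{Q}=F_0\subseteq F_1\subseteq\cdots\subseteq F_k=K$ with $F_j=\mathbb{Q}(\theta_1,\dots,\theta_j)$. Since $\theta_j$ is a root of $x^{n}-p_j\in F_{j-1}[x]$ one has $[F_j:F_{j-1}]\le n$, and the claim $[K:\mathbb{Q}]=n^{k}$ is equivalent to: $x^{n}-p_j$ is irreducible over $F_{j-1}$ for every $j$. By the Vahlen--Capelli criterion, $x^{n}-p_j$ is irreducible over $F_{j-1}$ if and only if (i) $p_j$ is not a $q$-th power in $F_{j-1}$ for every prime $q\mid n$, and (ii) if $4\mid n$ then $p_j\notin -4\,F_{j-1}^{\,4}$. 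Condition (ii) is free here because $F_{j-1}\subseteq\mathbb{R}$, so $-4\,F_{j-1}^{\,4}\subseteq(-\infty,0]$ while $p_j>0$; and (i) is also free when $p_j\nmid n$, since then $p_j$ divides neither $n$ nor any $p_i$ with $i<j$, so $p_j\nmid\operatorname{disc}(x^{n}-p_i)=\pm n^{n}p_i^{\,n-1}$, whence $p_j$ is unramified in each $\mathbb{Q}(\theta_i)$ and hence in the compositum $F_{j-1}$; then $v_{\mathfrak{P}}(p_j)=1$ for any prime $\mathfrak{P}$ of $F_{j-1}$ above $p_j$, so $p_j$ is not a $q$-th power for any $q\ge 2$.

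This leaves the case $p_j\mid n$, which I expect to be the main obstacle, since the valuation at $p_j$ no longer closes the argument ($p_j$ may be wildly ramified in $F_{j-1}$). The route I would take is first to reduce the whole statement to $n$ a prime power: if $n=n_1n_2$ with $\gcd(n_1,n_2)=1$ then $\mathbb{Q}(\sqrt[n]{p_i})=\mathbb{Q}(\sqrt[n_1]{p_i})\,\mathbb{Q}(\sqrt[n_2]{p_i})$, and the two fields $\mathbb{Q}(\sqrt[n_t]{p_1},\dots,\sqrt[n_t]{p_k})$, $t=1,2$, have coprime degrees, hence are linearly disjoint over $\mathbb{Q}$, so the case $n$ follows from the cases $n_1$ and $n_2$. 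For $n=q^{a}$ one passes to $E=\mathbb{Q}(\zeta_{q^{a}})$ and uses Kummer theory: it is enough that $p_1,\dots,p_k$ generate a subgroup of order $q^{ak}$ in $E^{\times}/(E^{\times})^{q^{a}}$, and then $[K:\mathbb{Q}]=q^{ak}$ follows by evaluating $[E\!\cdot\!K:\mathbb{Q}]$ in two ways. For $q$ odd this independence reduces, via the unramified valuations at the primes above the $p_i\neq q$ together with a ``Galois versus non-Galois'' comparison for the real radical $\mathbb{Q}(\sqrt[q^{b}]{q})$ sitting inside the abelian field $E$, to a routine check. The genuinely delicate part is $q=2$ with $4\mid n$: this is simultaneously the source of the $-4$-exception in Vahlen--Capelli and of the failure of linear disjointness (e.g.\ $\sqrt{2}\in\mathbb{Q}(\zeta_{8})$), and there one must exploit the reality of $F_{j-1}$ once more, together with the precise list of quadratic and quartic subfields of the $2$-power cyclotomic fields, to verify $p_j\notin F_{j-1}^{\,2}$. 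Everything outside this dyadic case---the spanning reduction, the tower, the discriminant bound, the coprime splitting, and the Kummer argument for odd $q$---is routine.
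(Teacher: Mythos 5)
The paper gives no proof of this theorem: it is imported verbatim from Richards \cite{Ri} (only its $n=2$ consequence, Corollary~\ref{Ri1}, is actually used later), so there is no internal argument to compare yours against, and your proposal must stand on its own. Its framing is correct — linear independence of the $n^k$ reduced radicals is equivalent to $[\mathbb{Q}(\sqrt[n]{p_1},\dots,\sqrt[n]{p_k}):\mathbb{Q}]=n^k$ — and several steps are sound: the Vahlen--Capelli reduction, the observation that the $-4$ exception is vacuous over a real field, the valuation/ramification argument when $p_j\nmid n$, and the splitting of $n$ into coprime factors. But the case you yourself identify as the main obstacle, $p_j\mid n$ and especially $q=2$ with $4\mid n$, is precisely where the theorem is nontrivial, and there the text stops being a proof: it is a list of tools one ``must exploit'' and checks you ``expect'' to be routine. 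Worse, the specific Kummer-theoretic sufficient condition you propose — that $p_1,\dots,p_k$ generate a subgroup of order $q^{ak}$ in $E^{\times}/(E^{\times})^{q^{a}}$ for $E=\mathbb{Q}(\zeta_{q^{a}})$ — is provably unattainable when $2\in\{p_1,\dots,p_k\}$ and $8\mid n$: since $\sqrt{2}=\zeta_8+\zeta_8^{-1}\in\mathbb{Q}(\zeta_8)$, the class of $2$ in $E^{\times}/(E^{\times})^{2^{a}}$ has order at most $2^{a-1}$, so $[EK:E]<2^{ak}$ and your two-way degree count no longer closes. That route must be replaced (e.g.\ by working over the maximal real subfield, or by a direct analysis of $F_{j-1}^{2}$), not merely ``verified''; as written, the dyadic case is a genuine gap.

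It is also worth noting that Richards's own proof takes a different and cleaner path that sidesteps the cyclotomic difficulties entirely: he exploits the fact that the radicals are \emph{positive real} numbers, showing that any nontrivial such radical has trace zero from the normal closure down to $\mathbb{Q}$, and then kills a putative rational dependence $\sum_j c_j r_j=0$ by dividing through by one $r_{j_0}$ and applying the trace, which isolates $c_{j_0}$. If you want a complete argument in the spirit of the source, that trace lemma is the missing key idea; if you want to salvage your field-degree approach, you still owe a correct treatment of $p_j\mid n$ and of the $2$-power case. For the only instance the paper actually needs ($n=2$, Corollary~\ref{Ri1}), your unramified-valuation argument (or simply unique factorization in $\mathbb{Q}$) already suffices.
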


When $n=2$, Theorem \ref{Ri} immediately implies the following result.

\begin{cor}
\label{Ri1}
The set $\left\{\sqrt{\Delta}: \Delta\text{~is~a~square-free~integer}\right\}$ is linearly independent over the set of rational numbers $\mathbb{Q}$.
\end{cor}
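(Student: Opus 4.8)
The statement to prove is Corollary \ref{Ri1}, which follows from Theorem \ref{Ri} with $n=2$.

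\medskip

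\textbf{Proof proposal.} The plan is to deduce Corollary \ref{Ri1} from Theorem \ref{Ri} by reducing an arbitrary finite collection of square roots of square-free integers to a finite application of the case $n = 2$. First I would recall what linear independence over $\mathbb{Q}$ of an infinite set means: every finite subset is linearly independent. So it suffices to fix finitely many distinct square-free integers $\Delta_1, \ldots, \Delta_s$ and show that $\sqrt{\Delta_1}, \ldots, \sqrt{\Delta_s}$ are linearly independent over $\mathbb{Q}$. (One should also handle $\Delta = 1$, for which $\sqrt{\Delta} = 1$; the argument below covers it by taking the empty product of primes, or one notes separately that $1$ together with square roots of square-free integers $>1$ remains independent.)

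\medskip

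Next I would collect all primes appearing in the factorizations of $\Delta_1, \ldots, \Delta_s$, calling them $p_1, \ldots, p_k$. Since each $\Delta_j$ is square-free, we may write $\Delta_j = p_1^{m_j(1)} \cdots p_k^{m_j(k)}$ with every exponent $m_j(i) \in \{0,1\}$; in particular $0 \le m_j(i) < 2 = n$, so each $\sqrt{\Delta_j}$ lies in the set $\left\{\sqrt{p_1^{m(1)} \cdots p_k^{m(k)}} : 0 \le m(i) < 2,\ 1 \le i \le k\right\}$ from Theorem \ref{Ri} with $n = 2$. Because the $\Delta_j$ are pairwise distinct and each is determined by its exponent vector $(m_j(1), \ldots, m_j(k)) \in \{0,1\}^k$, the elements $\sqrt{\Delta_1}, \ldots, \sqrt{\Delta_s}$ are distinct members of that set. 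Theorem \ref{Ri} asserts the whole set is linearly independent over $\mathbb{Q}$, and any subset of a linearly independent set is linearly independent, so $\sqrt{\Delta_1}, \ldots, \sqrt{\Delta_s}$ are linearly independent over $\mathbb{Q}$. Since the finite subset was arbitrary, the full set $\left\{\sqrt{\Delta} : \Delta \text{ is a square-free integer}\right\}$ is linearly independent over $\mathbb{Q}$.

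\medskip

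I do not expect any real obstacle here: the content is entirely in Theorem \ref{Ri}, and the corollary is just the specialization $n = 2$ together with the elementary observations that (i) square-free integers correspond bijectively to squarefree products of primes, i.e. to $0/1$ exponent vectors, and (ii) linear independence of an infinite set reduces to that of its finite subsets. The only point requiring a word of care is the bookkeeping that distinct square-free $\Delta$'s give distinct points of the finite set in Theorem \ref{Ri} — which is immediate from unique factorization — and the trivial case $\Delta = 1$, handled by the empty exponent vector.
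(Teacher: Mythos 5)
Your proposal is correct and is essentially the paper's own argument: the paper derives Corollary \ref{Ri1} simply by specializing Theorem \ref{Ri} to $n=2$, and your write-up just makes explicit the routine reduction to finite subsets and the identification of square-free integers with $0/1$ exponent vectors. No gaps.
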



\section{Laplacian eigenvalues and eigenprojectors of vertex complemented coronas}\label{Sec:cor}

Let $G$ be a graph with  vertex set $V(G)=\{v_1,\ldots,v_n\}$ and let $\overrightarrow{H}=(H_1,\ldots,H_n)$
be an $n$-tuple of graphs.  Formally, we label the vertex set of $G\tilde{\circ}\overrightarrow{H}$ as follows:
\begin{equation}\label{www}
  V(G\tilde{\circ}\overrightarrow{H})=\left\{(v,0): v\in V(G)\right\} \cup\bigcup_{j=1}^{n}\left\{(v_{j},w):v_{j}\in V(G), w\in V(H_{j})\right\},
\end{equation}
and the adjacency relation
\begin{equation}
	\label{relation}
(v_i,w)\sim(v_j,w') \Longleftrightarrow \left\{
           \begin{array}{lr}
     w=w'=0\text{~and~}  v_i\sim  v_j \text{~in~} G, & \text{or}\\[0.2cm]
   v_i= v_j\text{~and~} w\sim w'  \text{~in~} {H_l},  &   \text{or}\\[0.2cm]
  v_i\not=v_j\text{~and~exactly~one~of~$w$~and~$w'$~is~$0$}.& \end{array}\right.
\end{equation}



\noindent\textbf{Notation.} Recall that $\mathbf{j}_m$ denotes the column vector of size $m$ with all entries equal to one. Let $J_{m\times n}$ denote the $m\times n$ matrix with all entries equal to one. In particular, if $m=n$, we simply write $J_{m\times m}$ by $J_{m}$.
Let $\mathbf{e}_i^{n}$ denote the unit vector of size $n$ with the $i$-th entry equal to $1$. If the size $n$ of $\mathbf{e}_i^{n}$ can be easily read from the context, then we omit the superscript and write $\mathbf{e}_i^{n}$ as $\mathbf{e}_i$ for simplicity.

\begin{theorem}
\label{Hou1}
Let $G$ be a connected graph with $n$ vertices, $\overrightarrow{H}=(H_1,H_2,\ldots,H_n)$ an $n$-tuple of graphs with $|V(H_i)|=k\geq 1$, $i=1,2,\ldots,n$. Suppose that $G$ has Laplacian eigenvalues $0=\theta_0<\theta_1<\cdots<\theta_p$ with multiplicities $1=s_0,s_1,\ldots,s_p$. Then the Laplacian eigenvalues of $G\tilde{\circ}\overrightarrow{H}$ are
\begin{itemize}
\item[\rm (a)] $n-1$ with multiplicity $\left(\sum\limits_{i=1}^n s^i_{0}\right)-n$, where $s^i_{0}$ denotes the multiplicity of Laplacian eigenvalue $0$ of $H_i$;
 \item[\rm (b)]  $n-1+\mu$ with multiplicity $s_{\mu}$, where $\mu$ denotes a non-zero Laplacian eigenvalue of $H_i$ with multiplicity $s_{\mu}$, and $\mu$  runs over all non-zero Laplacian eigenvalue of $H_i$, for $i=1,2,\ldots,n$;

  \item[\rm (c)]  $\frac{1}{2}\left((n-1)(1+k)+\theta_j\pm\sqrt{((n-1)(k-1)+\theta_j)^2+4k}\right)$ with multiplicity $s_j$, for $j=1, 2, \ldots, p$;
      \item[\rm (d)] $(n-1)(1+k)$ with multiplicity $1$;
 \item[\rm (e)]  $0$ with multiplicity $1$.

 \end{itemize}
\end{theorem}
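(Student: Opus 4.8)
The plan is to compute the characteristic polynomial of $L:=L_{G\tilde{\circ}\overrightarrow{H}}$ by splitting the coordinate space into an $L$-invariant ``symmetric'' subspace $\mathcal{U}$ of dimension $2n$ and its orthogonal complement $\mathcal{U}^{\perp}$, on which $L$ acts only through the local Laplacians $L_{H_i}$. Ordering the $n$ vertices of $G$ first and then the blocks $V(H_1),\dots,V(H_n)$ as in (\ref{www}), the degree identities $\deg(v_i,0)=d_G(v_i)+(n-1)k$ and $\deg(v_i,w)=d_{H_i}(w)+(n-1)$ for $w\ne 0$, together with the adjacency rule (\ref{relation}), give
\[
L=\begin{pmatrix} L_G+(n-1)kI_n & -(J_n-I_n)\otimes\mathbf{j}_k^{\top}\\[2mm] -(J_n-I_n)\otimes\mathbf{j}_k & \bigoplus_{i=1}^n\bigl(L_{H_i}+(n-1)I_k\bigr)\end{pmatrix}.
\]

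For the complement part, note that if $L_{H_i}\mathbf{z}=\mu\mathbf{z}$ with $\mathbf{j}_k^{\top}\mathbf{z}=0$, then the vector equal to $\mathbf{z}$ on the $H_i$-block and zero elsewhere is an eigenvector of $L$ for the eigenvalue $(n-1)+\mu$, because the block $-(J_n-I_n)\otimes\mathbf{j}_k$ annihilates it. Let $\mathcal{U}$ be the span of the $n$ standard basis vectors on the $G$-part together with the $n$ vectors that equal $\mathbf{j}_k$ on one $H_i$-block and vanish elsewhere; a direct check shows $L\mathcal{U}\subseteq\mathcal{U}$, hence (by symmetry of $L$) $L\mathcal{U}^{\perp}\subseteq\mathcal{U}^{\perp}$, and $\mathcal{U}^{\perp}$ is exactly the span of all the vectors $\mathbf{z}$ just described, so it has dimension $n(k-1)$. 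Since $L_{H_i}$ restricted to $\mathbf{j}_k^{\perp}$ has eigenvalue $0$ with multiplicity $s_0^{i}-1$ and every nonzero eigenvalue of $H_i$ with full multiplicity, the spectrum of $L|_{\mathcal{U}^{\perp}}$ is precisely items (a) and (b), that is $\sum_i(s_0^{i}-1)+\sum_i(k-s_0^{i})=nk-n$ eigenvalues.

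It remains to find the $2n$ eigenvalues of $L|_{\mathcal{U}}$. In the basis $\{\mathbf{e}_i\}\cup\{\mathbf{e}_i\otimes\mathbf{j}_k\}$ this restriction is represented by
\[
M=\begin{pmatrix} L_G+(n-1)kI_n & -k(J_n-I_n)\\[2mm] -(J_n-I_n) & (n-1)I_n\end{pmatrix}.
\]
For $x\ne n-1$, I would apply the Schur complement formula of Lemma \ref{tian6} to the invertible block $(x-n+1)I_n$, use $(J_n-I_n)^2=(n-2)J_n+I_n$, and then Lemma \ref{liuzhang1} with $\Gamma_{L_G}(y)=n/y$ (Lemma \ref{tian12}) on the remaining $n\times n$ determinant. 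Writing $y:=x-(n-1)k-\tfrac{k}{x-n+1}$ and $\alpha:=\tfrac{k(n-2)}{x-n+1}$, and using $\theta_0=0$ and $s_0=1$ to cancel the factor $y$, this yields
\[
\det(xI_{2n}-M)=\bigl[(x-n+1)(y-\alpha n)\bigr]\prod_{j=1}^{p}\bigl[(x-n+1)(y-\theta_j)\bigr]^{s_j}.
\]
Expanding, $(x-n+1)(y-\theta_j)=x^2-\bigl((n-1)(k+1)+\theta_j\bigr)x+\bigl(kn(n-2)+(n-1)\theta_j\bigr)$, whose two roots are exactly the numbers listed in item (c); indeed the discriminant equals $\bigl((n-1)(k-1)+\theta_j\bigr)^2+4k$ because $\bigl((n-1)(k+1)+\theta_j\bigr)^2-\bigl((n-1)(k-1)+\theta_j\bigr)^2=4(n-1)\bigl((n-1)k+\theta_j\bigr)$. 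Likewise $(x-n+1)(y-\alpha n)=(x-n+1)(x-(n-1)k)-k(n-1)^2=x\bigl(x-(n-1)(k+1)\bigr)$, which gives items (e) and (d). Finally the multiplicities add up: the $2n$ eigenvalues coming from $\mathcal{U}$ plus the $nk-n$ from $\mathcal{U}^{\perp}$ total $n(k+1)$, the order of $G\tilde{\circ}\overrightarrow{H}$.

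The only real obstacle is computational: carrying out the Schur-complement reduction and the $M$-coronal bookkeeping cleanly, and then recognizing that the resulting degree-$2n$ polynomial factors as $n-1$ quadratics plus two linear factors. In particular one must verify that the radicand appearing in item (c) is exactly the discriminant of $(x-n+1)(y-\theta_j)$, and handle the Laplacian eigenvalue $\theta_0=0$ of $G$ separately, since it contributes not a quadratic but the two linear factors $x$ and $x-(n-1)(k+1)$ of items (e) and (d).
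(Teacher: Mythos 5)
Your proof is correct, and it reaches the factorization by a genuinely different organization of the computation. The paper works on the full $n(k+1)\times n(k+1)$ determinant in one shot: it takes the Schur complement with respect to the block-diagonal part $N=\sum_i\mathbf{e}_i\mathbf{e}_i^\top\otimes((x-n+1)I_k-L_{H_i})$, evaluates $(M\otimes\mathbf{j}_k^\top)N^{-1}(M^\top\otimes\mathbf{j}_k)=\frac{k}{x-n+1}(I_n+(n-2)J_n)$ via the $M$-coronal (Lemma \ref{tian12}), and then applies Lemma \ref{liuzhang1}; items (a)--(b) emerge from $\det(N)=\prod_i\det((x-n+1)I_k-L_{H_i})$ only after the factor $(x-n+1)^{-n}$ appearing in $\det(S)$ cancels $n$ copies of $(x-n+1)$. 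You instead first split off the $L$-invariant subspace $\mathcal{U}^\perp$ (an equitable-partition-style reduction), on which $L$ acts as $\bigoplus_i(L_{H_i}|_{\mathbf{j}_k^\perp}+(n-1)I)$, so the multiplicity $\sum_i s_0^i-n$ of the eigenvalue $n-1$ and item (b) come out structurally, with explicit eigenvectors and no negative-power bookkeeping; the remaining $2n\times 2n$ quotient matrix is then handled by the same Schur-complement-plus-coronal machinery (Lemmas \ref{tian6}, \ref{tian12}, \ref{liuzhang1}) as the paper, and your verification that the discriminant of $(x-n+1)(y-\theta_j)$ equals $((n-1)(k-1)+\theta_j)^2+4k$ and that the $\theta_0=0$ factor splits into $x$ and $x-(n-1)(k+1)$ matches the paper's algebra exactly. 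Two minor points worth making explicit if you write this up: your matrix for $L|_{\mathcal{U}}$ is taken with respect to the non-orthonormal basis $\{\mathbf{e}_i\}\cup\{\mathbf{e}_i\otimes\mathbf{j}_k\}$, so it is not symmetric but still has the correct spectrum; and the cancellation of the factor $y$ uses $s_0=1$, i.e.\ the connectedness of $G$, which you do note. As a bonus, your decomposition essentially pre-computes the eigenvectors that the paper only constructs later in Theorem \ref{eigenprojector}.
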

 \begin{proof}
Define $M=J_n-I_n$. The Laplacian matrix of $G\tilde{\circ}\overrightarrow{H}$ is given by
\begin{equation}\label{LmatrixGH}
	L_{G\tilde{\circ}\overrightarrow{H}}=
	\left(
	\begin{array}{cc}
	L_G+(n-1)kI_{n}& -M\otimes \mathbf{j}_{k}^{\top}\\ [0.2cm]
	-M^{\top}\otimes \mathbf{j}_{k}& \sum\limits_{i=1}^{n} \left(\mathbf{e}^n_i(\mathbf{e}^n_i)^\top\otimes (L _{H_i}+(n-1)I_{k})\right)
	\end{array}
	\right).
	\end{equation}
By Lemma \ref{tian6},  the Laplacian characteristic polynomial of $G\tilde{\circ}\overrightarrow{H}$ is
 \begin{align*}
   \phi(L_{G\tilde{\circ}\overrightarrow{H}};x)=&\det\left(
	\begin{array}{cc}
	(x-(n-1)k)I_{n}-L_G& M\otimes \mathbf{j}_{k}^{\top}\\ [0.2cm]
	M^{\top}\otimes \mathbf{j}_{k}& \sum\limits_{i=1}^{n} \left( \mathbf{e}^n_i(\mathbf{e}^n_i)^\top\otimes ((x-(n-1))I_{k}-L _{H_i})\right)
	\end{array}
	\right)\\
=&\det(N)\det(S),
 \end{align*}
where
\begin{equation*}
  N=\sum\limits_{i=1}^{n} \left(\mathbf{e}^n_i(\mathbf{e}^n_i)^\top\otimes ((x-(n-1))I_{k}-L _{H_i})\right),
\end{equation*}
and
\begin{equation*}
  S=(x-(n-1)k)I_{n}-L _{G}-{(M\otimes \mathbf{j}_{k}^{\top})N^{-1}(M^{\top}\otimes \mathbf{j}_{k})}.
\end{equation*}
By Lemma \ref{tian12}, we have
\begin{equation*}\label{eeee}
 (M\otimes \mathbf{j}_{k}^{\top})N^{-1}(M^{\top}\otimes \mathbf{j}_{k})=\frac{k}{x-n+1}MM^{\top}=\frac{k}{x-n+1}(I_{n}+(n-2)J_{n}).
\end{equation*}
So, by Lemmas \ref{tian12} and \ref{liuzhang1}, we have
\begin{align*}
\det(S)=& \det\left(\left(x-(n-1)k-\frac{k}{x-n+1}\right)I_{n}-L_{G}-\frac{(n-2)k}{x-n+1} J_{n}\right)\\
=&\left(1-\frac{(n-2)k}{x-n+1}\cdot\Gamma_{L_G}\left(x-(n-1)k-\frac{k}{x-n+1}\right)\right)\\
&\cdot \det\left(\left(x-(n-1)k-\frac{k}{x-n+1}\right)I_{n}-L_{G}\right)\\
=&(x-n+1)^{-n}\cdot\left(1-\frac{(n-2)k}{x-n+1}\cdot \frac{n}{x-(n-1)k-\frac{k}{x-n+1}}\right)\\
&\cdot \det\left(\left((x-(n-1)k)(x-n+1)-k\right)I_{n}-(x-n+1)L_{G}\right)\\[0.2cm]
=&(x-n+1)^{-n}\cdot \frac{(x-(n-1)k)(x-n+1)-k-k(n-2)n}{(x-(n-1)k)(x-n+1)-k}\\
   &\cdot\prod\limits_{j=0}^{p}\left((x-(n-1)k)(x-n+1)-k-(x-n+1)\theta_{j}\right)^{s_j}\\
=&(x-n+1)^{-n}\cdot x(x-(n-1)(1+k))\\
   &\cdot\prod\limits_{j=1}^{p}\left((x^{2}-((n-1)(1+k)+\theta_{j})x+(n-1)^{2}k-k+(n-1)\theta_{j}\right)^{s_j}.
\end{align*}
Note that
\begin{align*}
\det(N)=&\prod_{i=1}^{n} \det((x-(n-1))I_{k}-L _{H_i}).
\end{align*}
Therefore, the required result follows from  $\phi(L_{G\tilde{\circ}\overrightarrow{H}};x)
=\det(N)\det(S)$.
\qed
\end{proof}

Now, we compute the Laplacian  eigenprojectors of $G\tilde{\circ}\overrightarrow{H}$.

\begin{theorem}\label{eigenprojector}
Let $G$ be a connected graph with $n$ vertices, and $\overrightarrow{H}=(H_1,H_2,\ldots,H_n)$ an $n$-tuple of graphs with $|V(H_i)|=k\geq 1$, $i=1,2,\ldots,n$. Then  the Laplacian  eigenprojectors of $G\tilde{\circ}\overrightarrow{H}$ are stated as follows:
\begin{itemize}
   \item[\rm (a)]  If there exists $H_l\in \{H_1,\ldots,H_n\}$ that is disconnected,  then $n-1$ is a Laplacian eigenvalue of $G\tilde{\circ}\overrightarrow{H}$ with the eigenprojector
          \begin{equation}\label{F013}
          F_{n-1}=
	\left(
	\begin{array}{cc}
	\mathbf{0}& \mathbf{0} \\ [0.2cm]
	\mathbf{0}& \sum\limits_{l=1}^{n} \left(\mathbf{e}^{n}_l(\mathbf{e}^{n}_l)^\top\right)\otimes \left(F_0(H_l)-\frac{1}{k}J_k\right)
	\end{array}
	\right),	
\end{equation}
where $F_0(H_l)$ denotes the eigenprojector corresponding to the Laplacian eigenvalue $0$ of $H_l$.
     \item[\rm (b)]If $\mu$ is a nonzero Laplacian eigenvalue of some $H_l\in \{H_1,\ldots,H_m\}$, then $n-1+\mu$ is a Laplacian eigenvalue of $G\tilde{\circ}\overrightarrow{H}$ with the eigenprojector
        \begin{equation}\label{F13}
	F_{n-1+\mu}=
	\left(
	\begin{array}{cc}
	\mathbf{0}& \mathbf{0} \\ [0.2cm]
	\mathbf{0}& \sum\limits_{l=1}^{n} \left(\mathbf{e}^{n}_l(\mathbf{e}^{n}_l)^\top\right)\otimes F_\mu(H_l)
	\end{array}
	\right),
	\end{equation}
where $F_\mu(H_l)$ denotes the eigenprojector corresponding to the Laplacian eigenvalue $\mu$ of $H_l$. Here, we assume that $F_\mu(H_l)=0$ if $\mu$ is not a Laplacian eigenvalue of $H_l$.
  \item[\rm (c)] If $\theta\not=0$ is a Laplacian eigenvalue of $G$, then $$\theta_{\pm}:=\frac{1}{2}\left((n-1)(1+k) +\theta\pm\sqrt{((n-1)(k-1)+\theta)^2+4k}\right)$$ are Laplacian eigenvalues of $G\tilde{\circ}\overrightarrow{H}$ with the eigenprojectors
     \begin{align}\label{F14} F_{\theta_{\pm}} =&\frac{(n-1-\theta_{\pm})^{2}}{(n-1-\theta_{\pm})^{2}+k}
 \left(
	\begin{array}{cc}
	F_{\theta}(G)& \frac{1}{n-1-\theta_{\pm}} \left(F_{\theta}(G)M\right)\otimes \mathbf{j}_{k}^\top\\ [0.3cm]
	\frac{1}{n-1-\theta_{\pm}}( F_{\theta}(G)M)^\top\otimes \mathbf{j}_{k}& \frac{1}{(n-1-\theta_{\pm})^2}\left(M^\top F_{\theta}(G)M\right) \otimes J_{k}
	\end{array}
	\right),
	\end{align}
where $M=J_n-I_n$ and $F_\theta(G)$ denotes the eigenprojector corresponding to eigenvalue $\theta\neq0$ of $G$.
\item[\rm (d)]  $(n-1)(1+k)$ is a Laplacian eigenvalue of $G\tilde{\circ}\overrightarrow{H}$ with the eigenprojector
     \begin{equation}\label{F15}
	F_{(n-1)(1+k)}=\frac{1}{nk(k+1)}
	\left(
	\begin{array}{cc}
	k^{2}J_{n} & -k J_{n}\otimes \mathbf j_{k}^{\top}\\ [0.2cm]
	-k J_{n}\otimes \mathbf j_{k} &   J_{n}\otimes J_{k}\\ [0.2cm]
	
	\end{array}
	\right),
	\end{equation}
\item[\rm (e)]
 $0$ is a Laplacian eigenvalue of $G\tilde{\circ}\overrightarrow{H}$ with the eigenprojector
     \begin{equation}\label{F16}
	F_{0}=\frac{1}{n(k+1)}J_{n(k+1)}.
	 \end{equation}
\end{itemize}
Therefore, the spectral decomposition of  $L_{G\tilde{\circ}\overrightarrow{H}}$ is given by
\begin{equation}\label{S1}
	L_{G\tilde{\circ}\overrightarrow{H}}=\sum_{\theta \in \mathrm{Spec}_L(G)\setminus\left\{0\right\}}\sum_{\pm} \theta_{\pm}F_{\theta_{\pm}}+\sum_{\mu}(n-1+\mu)F_{n-1+\mu}+(n-1)(1+k)F_{(n-1)(1+k)},
	\end{equation}
where the sum over $\mu$ is over all eigenvalues of the graphs $H_l$, for $l=1,\ldots,n$.
\end{theorem}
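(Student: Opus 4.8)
The plan is to produce an explicit orthogonal eigenbasis of $L_{G\tilde{\circ}\overrightarrow{H}}$ by lifting eigenvectors of $L_G$ and of the $L_{H_i}$ through the block form (\ref{LmatrixGH}), and then to read off each eigenprojector as the associated sum of normalized rank-one projectors $\mathbf{v}\mathbf{v}^{\top}/\|\mathbf{v}\|^{2}$. Throughout I will use that $M=M^{\top}=J_n-I_n$, that $M\mathbf{y}=-\mathbf{y}$ whenever $\mathbf{y}\perp\mathbf{j}_n$, that $M\mathbf{j}_n=(n-1)\mathbf{j}_n$, that $L_{H_i}\mathbf{j}_k=\mathbf{0}$, and — once the $G$-eigenvectors are identified — that $F_\theta(G)M=-F_\theta(G)$ and $M^{\top}F_\theta(G)M=F_\theta(G)$ for $\theta\neq0$ because $F_\theta(G)$ projects onto a subspace of $\mathbf{j}_n^{\perp}$. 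These reduce the final step to routine Kronecker-product bookkeeping.

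First I would handle the \emph{interior} vectors $(\mathbf{0};\mathbf{e}^{n}_l\otimes\mathbf{z})$, where $\mathbf{z}$ is a Laplacian eigenvector of $H_l$ with eigenvalue $\mu$ and $\mathbf{j}_k^{\top}\mathbf{z}=0$. Multiplying (\ref{LmatrixGH}) against this vector, the off-diagonal block contributes $(\mathbf{j}_k^{\top}\mathbf{z})\,M\mathbf{e}^{n}_l=\mathbf{0}$ and the lower-right block multiplies $\mathbf{z}$ by $\mu+(n-1)$, so it is an eigenvector with eigenvalue $n-1+\mu$. Since $\mathbf{j}_k$ always lies in $\ker L_{H_l}$, for $\mu\neq0$ every $\mu$-eigenvector is automatically orthogonal to $\mathbf{j}_k$, yielding eigenvalue $n-1+\mu$ with the multiplicity of $\mu$ in $H_l$; for $\mu=0$ the admissible $\mathbf{z}$ span the orthogonal complement of $\mathbf{j}_k$ inside $\ker L_{H_l}$, of dimension (number of components of $H_l$)$-1$, all with eigenvalue $n-1$. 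Summing the rank-one projectors over $l$ and over an orthonormal basis of each of these spaces gives (\ref{F013}) — the term $-\tfrac1k J_k$ being exactly the projector onto the removed $\mathbf{j}_k$-direction — and (\ref{F13}).

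Next I would handle the \emph{mixed} vectors $(\mathbf{y};t\,\mathbf{y}\otimes\mathbf{j}_k)$ for a Laplacian eigenvector $\mathbf{y}$ of $L_G$ with eigenvalue $\theta$ and a scalar $t$. Using $\mathbf{j}_k^{\top}\mathbf{j}_k=k$ and $L_{H_i}\mathbf{j}_k=\mathbf{0}$, the eigenvalue equation collapses to the pair $(\theta+(n-1)k)\mathbf{y}-kt\,M\mathbf{y}=\lambda\mathbf{y}$ and $-M\mathbf{y}+(n-1)t\,\mathbf{y}=\lambda t\,\mathbf{y}$. For $\theta\neq0$, substituting $M\mathbf{y}=-\mathbf{y}$ and eliminating $t$ gives $\lambda^{2}-((n-1)(1+k)+\theta)\lambda+(n-1)^{2}k-k+(n-1)\theta=0$, whose roots are the $\theta_{\pm}$ of part (c) with $t=t_{\pm}=(\theta_{\pm}-n+1)^{-1}=-(n-1-\theta_{\pm})^{-1}$; for $\theta=0$ (so $\mathbf{y}=\mathbf{j}_n$, $G$ connected, $M\mathbf{j}_n=(n-1)\mathbf{j}_n$) the pair has exactly the two solutions $(t,\lambda)=(1,0)$ and $(t,\lambda)=(-1/k,\,(n-1)(k+1))$, giving (e) and (d). Since $\|(\mathbf{y};t\,\mathbf{y}\otimes\mathbf{j}_k)\|^{2}=\|\mathbf{y}\|^{2}(1+kt^{2})$, the normalized rank-one projector $\mathbf{v}\mathbf{v}^{\top}/\|\mathbf{v}\|^{2}$ carries the constant $\tfrac{(n-1-\theta_{\pm})^{2}}{(n-1-\theta_{\pm})^{2}+k}$ in case (c), $\tfrac{k}{n(k+1)}$ in case (d), and $\tfrac1{n(k+1)}$ in case (e); assembling the three Kronecker blocks and simplifying the $\mathbf{y}\mathbf{y}^{\top}$-sums via $F_\theta(G)M=-F_\theta(G)$ and $M^{\top}F_\theta(G)M=F_\theta(G)$ reproduces (\ref{F14}), (\ref{F15}), (\ref{F16}).

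Finally I would verify completeness: the interior family supplies $\sum_l(k-1)=n(k-1)$ mutually orthogonal vectors and the mixed family $2\sum_{\theta\neq0}s_\theta+2=2n$ more, for a total of $n(k+1)=|V(G\tilde{\circ}\overrightarrow{H})|$; orthogonality across the two families is immediate (one is supported on $\mathbf{j}_k^{\perp}$ in the $H$-blocks, the other on $\mathbf{j}_k$) and within a family it follows from orthogonality of the underlying $L_G$- or $L_{H_i}$-eigenvectors. Hence the listed matrices are precisely the eigenprojectors, and $\sum_\lambda\lambda F_\lambda$ — with the eigenvalue $n-1$ absorbed into the $\sum_\mu(n-1+\mu)F_{n-1+\mu}$ term and the zero eigenvalue contributing nothing — gives (\ref{S1}). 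I expect the only real obstacle to be the mixed-family bookkeeping: correctly tracking the off-diagonal and lower-right Kronecker blocks and confirming that the normalizations collapse to the stated closed forms, together with the minor care needed if several of the eigenvalues in (a)–(e) happen to coincide — the corresponding eigenspaces are still orthogonal, so the stated matrices simply add and $\sum_\lambda e^{-\mathrm{i}t\lambda}F_\lambda$ is unaffected.
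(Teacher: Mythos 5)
Your proposal is correct and takes essentially the same route as the paper: both construct the identical explicit eigenvectors $(\mathbf{0};\mathbf{e}^n_l\otimes\mathbf{z})$ and $(\mathbf{y};t\,\mathbf{y}\otimes\mathbf{j}_k)$, check mutual orthogonality, and assemble each $F_\lambda$ as a sum of normalized rank-one outer products. The only differences are presentational --- you derive $t$ and $\theta_{\pm}$ from the two-block eigenvalue equation (treating (c), (d), (e) uniformly) and make the completeness count $n(k-1)+2n=n(k+1)$ explicit, whereas the paper verifies the prescribed eigenvectors directly and checks $\mathbf{Z}_+\perp\mathbf{Z}_-$ via $(n-1-\theta_+)(n-1-\theta_-)=-k$.
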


\begin{proof}
The proofs of (a), (b), (c), (d) and (e) consist of Claims 1--3.
\smallskip

\noindent\textbf{Claim 1:} $\mathbf{X}$,  $\mathbf{Y}$ and $\mathbf{Z_{\pm}}$ defined below are the eigenvectors of $G\tilde{\circ}\overrightarrow{H}$ corresponding to the Laplacian eigenvalues $n-1+\mu$, $(n-1)(1+k)$ and $\theta_{\pm}$.
\smallskip

\noindent\emph{Proof of Claim 1.}
Let $H_l$ be  one of graphs in $\overrightarrow{H}$. Note that $0$ is always a Laplacian eigenvalue of $H_l$ with eigenvector $\mathbf{j}_{k}$, and the multiplicity of $0$ is equal to the number of connected components of $H_l$. Note also that $F_0(H_l)=\frac{1}{k}J_k$ if and only if $H_l$ is connected. Suppose that  $\mathbf{x}\perp \mathbf{j}_{k}$ is a normalized eigenvector of $L_{H_l}$ corresponding to the Laplacian eigenvalue $\mu$ of $H_l$ (Here, $\mu$ may be equal to $0$ if $H_l$ is disconnected). Then, we have a $(n+nk)$-dimension vector $\mathbf{X}:=\left(\begin{array}{c}
\mathbf{0}_{n\times 1} \\[0.2cm]
\mathbf{e}_l^n\otimes\mathbf{x}
\end{array}
\right)$ such that
\begin{align}\label{Sp1}
 L_{G\tilde{\circ}\overrightarrow{H}}\mathbf{X}
 &=(n-1+\mu)\mathbf{X},
\end{align}
where $\mathbf{0}_{s\times t}$ denotes $s\times t$ matrix with all entries equal to $0$. Then $n-1+\mu$ is a Laplacian eigenvalue of $L_{G\tilde{\circ}\overrightarrow{H}}$ with the eigenvector $\mathbf{X}$ (Here, $n-1$ is also a Laplacian eigenvalue of $L_{G\tilde{\circ}\overrightarrow{H}}$ with a specified eigenvector $\mathbf{X}$  if $H_l$ is disconnected).

Suppose that $\mathbf{z}\perp \mathbf{j}_{n}$ is a normalized eigenvector of $L_{G}$  corresponding to the Laplacian eigenvalue $\theta\not=0$.
Define a vector $\mathbf{Z}_{\pm}:=\left(\begin{array}{c}
                                                                 \mathbf{z}\\[0.2cm]
                                                                 \frac{1}{n-1-\theta_{\pm}}(M^\top\mathbf{z}) \otimes \mathbf{j}_{k}
                                                               \end{array}
\right)$. Notice that
$$ MM^\top=(J_n-I_n)(J_n-I_n)^\top=I_{n}+(n-2) J_{n},$$
where $M=J_n-I_n$. Note also that $\mathbf{z}$ can be regarded as $\mathbf{z}\otimes 1$. Then, by (\ref{LmatrixGH}),
\begin{align}\label{Sp2}
 &L_{G\tilde{\circ}\overrightarrow{H}}\mathbf{Z}_{\pm}\nonumber\\
= &\left(
	\begin{array}{cc}
	L_G+(n-1)kI_{n}& -M\otimes \mathbf{j}_{k}^{\top}\\ [0.2cm]
	-M^{\top}\otimes \mathbf{j}_{k}& \sum\limits_{i=1}^{n} \left(\mathbf{e}^n_i(\mathbf{e}^n_i)^\top\otimes (L _{H_i}+(n-1)I_{k})\right)
	\end{array}
	\right)
\left(\begin{array}{c}
\mathbf{z}\\[0.2cm]
\frac{1}{n-1-\theta_{\pm}}(M^\top\mathbf{z}) \otimes \mathbf{j}_{k}
\end{array}
\right)\nonumber\\
= & \left(\begin{array}{c}
 (\theta+(n-1)k)\mathbf{z} - \frac{1}{n-1-\theta_{\pm}}(MM^\top\mathbf{z}) \otimes \mathbf{j}_{k}^{\top}\mathbf{j}_{k} \\[0.2cm]
 -(M^\top\mathbf{z}) \otimes\mathbf{j}_{k} +\frac{n-1}{n-1-\theta_{\pm}}(M^\top\mathbf{z}) \otimes \mathbf{j}_{k}
 \end{array}
\right)\nonumber\\
= & \left(\begin{array}{c}
 (\theta+(n-1)k)\mathbf{z} - \frac{k}{n-1-\theta_{\pm}}\mathbf{z} \\[0.2cm]
 \frac{\theta_{\pm}}{n-1-\theta_{\pm}}(M^\top\mathbf{z}) \otimes \mathbf{j}_{k}
 \end{array}
\right)\nonumber\\
=& \theta_{\pm}\mathbf{Z}_{\pm}.
\end{align}
Thus $\theta_{\pm}$ are eigenvalues of $L_{G\tilde{\circ}\overrightarrow{H}}$ with the eigenvector $\mathbf{Z}_{\pm}$ for each Laplacian eigenvalue $\theta\not=0$ of $L_G$.

Define a vector $\mathbf{Y}:=\left(\begin{array}{c}
                                    -k\mathbf j_{n} \\
                                    \mathbf j_{n}\otimes \mathbf j_{k}
                                  \end{array}\right)$. It is easy to verify that
     \begin{align}\label{Sp4}
 L_{G\tilde{\circ}\overrightarrow{H}}\mathbf{Y}
 &= (n-1)(1+k)\mathbf{Y}.
\end{align}
Thus $(n-1)(1+k)$ is a Laplacian eigenvalue of $L_{G\tilde{\circ}\overrightarrow{H}}$ with the eigenvector $\mathbf{Y}$.
%
%
%
%

\smallskip
\noindent\textbf{Claim 2:} All $\mathbf{X}$'s, $\mathbf{Y}$'s and $\mathbf{Z}_{\pm}$'s are orthogonal eigenvectors of  $L_{G\tilde{\circ}\overrightarrow{H}}$.
\smallskip

\noindent\emph{Proof of Claim 2.} Recall that $\mathbf{x}\perp \mathbf{j}_{k}$. Then $\mathbf{X}\perp \mathbf{Y}$, $\mathbf{X}\perp \mathbf{Z}_{\pm}$. Recall also that $\mathbf{z}\perp \mathbf{j}_{n}$. Then $\mathbf{Z}_{\pm}\perp \mathbf{Y}$.
Let $\mathbf{z}$ and $\mathbf{z}'$ be orthogonal  eigenvectors of $L_{G}$ corresponding to nonzero eigenvalues $\theta$ and $\theta'$, respectively (Here, $\theta$ and $\theta'$ may be equal). Then
\begin{align*}
(M^\top\mathbf{z})^\top(M^\top\mathbf{z}')=&\mathbf{z}^\top
MM^\top\mathbf{z}'=\mathbf{z}^\top(I_{n}+(n-2) J_{n})\mathbf{z}'
 =\mathbf{z}^\top\mathbf{z}'
 =0.
\end{align*}
Consider $\mathbf{Z}_{\pm}=\left(\begin{array}{c}
                                                                 \mathbf{z}\\[0.2cm]
                                                                 \frac{1}{n-1-\theta_{\pm}}(M^\top\mathbf{z}) \otimes \mathbf{j}_{k}
                                                               \end{array}
\right)$ and $\mathbf{Z}'_{\pm}=\left(\begin{array}{c}
                                                                 \mathbf{z}'\\[0.2cm]
                                                                 \frac{1}{n-1-\theta_{\pm}'}(M^\top\mathbf{z}') \otimes \mathbf{j}_{k}
                                                               \end{array}
\right)$.
Note that
\begin{align*}
\mathbf{Z}_{\pm}^\top\mathbf{Z}'_{\pm} &=\mathbf{z}^\top\mathbf{z}'+
\frac{(M^\top\mathbf{z})^\top(M^\top\mathbf{z}')\otimes \mathbf{j}^\top_{k}\mathbf{j}_{k}}{(n-1-\theta_{\pm})(n-1-\theta_{\pm}')}=0.
\end{align*}
Then $\mathbf{Z}_{\pm}\perp \mathbf{Z}'_{\pm}$.

For $\mathbf{Z}_{+},\, \mathbf{Z}_{-}$, note that
$$
(M^\top\mathbf{z})^\top(M^\top\mathbf{z})=\mathbf{z}^\top
MM^\top\mathbf{z}=\mathbf{z}^\top(I_{n}+(n-2) J_{n})\mathbf{z}= \mathbf{z}^\top\mathbf{z}=1,
$$
and
\begin{align*}
(n-1-\theta_{+})(n-1-\theta_{-})=& -k.
\end{align*}
So, we have
\begin{align*}
\mathbf{Z}_{+}^\top\mathbf{Z}_{-}&=\mathbf{z}^\top\mathbf{z}+
\frac{(M^\top\mathbf{z})^\top(M^\top\mathbf{z})\otimes \mathbf{j}^\top_{k}\mathbf{j}_{k}}{(n-1-\theta_{+})(n-1-\theta_{-})} =1+\frac{k}{-k}=0,
\end{align*}
which implies that $\mathbf{Z}_{+}\perp \mathbf{Z}_{-}$.

\smallskip

\noindent\textbf{Claim 3:} (\ref{F013}), (\ref{F13}), (\ref{F14}), (\ref{F15}) and (\ref{F16}) are eigenprojectors of $L_{G\tilde{\circ}\overrightarrow{H}}$ corresponding to Laplacian eigenvalues $n-1$, $n-1+\mu$, $\theta_{\pm}$, $(n-1)(1+k)$  and $0$.
\smallskip

\noindent\emph{Proof of Claim 3.} If there exists $H_l\in \{H_1,\ldots,H_m\}$ that is disconnected, by (\ref{Sp1}) and (\ref{Sp4}), it is easy to verify that (\ref{F013}), (\ref{F13}) and (\ref{F15}) are the eigenprojectors corresponding to the Laplacian eigenvalues $n-1$, $n-1+\mu$ and $(n-1)(1+k)$.  Note that $\mathbf{j}_{n(k+1)}$ is the eigenvector of the Laplacian eigenvalue $0$ of $G\tilde{\circ}\overrightarrow{H}$. Then, one can easily verify that (\ref{F16}) is the eigenprojector of $L_{G\tilde{\circ}\overrightarrow{H}}$ corresponding to $0$.

Let $\theta\not=0$ be a Laplacian eigenvalue of $L_{G}$.
Suppose that $\left\{\mathbf{z}^{(1)},\ldots, \mathbf{z}^{(s)}\right\}$ is an orthonormal basis of the eigenspace corresponding to $\theta$. Set $\mathbf{Z}^{(i)}_{\pm}=\left(\begin{array}{c}
                                                                \mathbf{z}^{(i)}\\[0.2cm]
                                                                 \frac{1}{n-1 -\theta_{\pm}}(M^\top\mathbf{z}^{(i)}) \otimes \mathbf{j}_{k}
                                                               \end{array}
\right)$. Note that
$\left(M^\top\mathbf{z}^{(i)}\right)^\top\left(M^\top\mathbf{z}^{(i)}\right)=1$. Then $\left\|\mathbf{Z}^{(i)}_{\pm}\right\|^2=1+\frac{k}{(n-1-\theta_{\pm})^2}$. Let
$F_{\theta}(G)=\sum\limits_{i=1}^s\left(\mathbf{z}^{(i)}\right)(\mathbf{z}^{(i)})^\top$ be the eigenprojector corresponding to $\theta$ of $G$. Then eigenprojectors of $L_{G\tilde{\circ}\overrightarrow{H}}$ corresponding to $\theta_{\pm}$  are given as follows:
\begin{align*}
&F_{\theta_{\pm}}(G\tilde{\circ}\overrightarrow{H}) \\ =&\frac{(n-1-\theta_{\pm})^{2}}{(n-1-\theta_{\pm})^{2}+k}\sum_{i=1}^{s} \mathbf{Z}^{(i)}_{\pm} \left(\mathbf{Z}^{(i)}_{\pm}\right)^\top\\ =&\frac{(n-1-\theta_{\pm})^{2}}{(n-1-\theta_{\pm})^{2}+k}   \\
&\cdot
\small{
\left(\begin{array}{cc}
\sum\limits_{i=1}^{s}
\left(\mathbf{z}^{(i)}\right)\left(\mathbf{z}^{(i)}\right)^\top   & \frac{1}{n-1-\theta_{\pm}} \left(\left(\sum\limits_{i=1}^{s}
\left(\mathbf{z}^{(i)}\right)\left(\mathbf{z}^{(i)}\right)^\top\right)M\right)\otimes \mathbf{j}^\top_{k} \\ [0.45cm]
	\frac{1}{n-1-\theta_{\pm}} \left(\left(\sum\limits_{i=1}^{s}
\left(\mathbf{z}^{(i)}\right)\left(\mathbf{z}^{(i)}\right)^\top\right)M\right)^\top\otimes \mathbf{j}_{k} &  \frac{1}{(n-1-\theta_{\pm})^2} \left(\sum\limits_{i=1}^{s}\left( M^\top\mathbf{z}^{(i)}\right)\left(\left(\mathbf{z}^{(i)}\right)^\top M\right) \right)\otimes \mathbf{j}_{k}\mathbf{j}^\top_{k}
	\end{array}
	\right)
}\\[0.3cm]
=&\frac{(n-1-\theta_{\pm})^{2}}{(n-1-\theta_{\pm})^{2}+k}
	\left(
\begin{array}{cc}
F_{\theta}(G)& \frac{1}{n-1-\theta_{\pm}} \left(F_{\theta}(G)M\right)\otimes \mathbf{j}_{k}^\top\\ [0.3cm]
	\frac{1}{n-1-\theta_{\pm}}( F_{\theta}(G)M)^\top\otimes \mathbf{j}_{k}& \frac{1}{(n-1-\theta_{\pm})^2}\left(M^\top F_{\theta}(G)M\right) \otimes J_{k}
	\end{array}
	\right),
\end{align*}
yielding (\ref{F14}).

At last, one can verify that (\ref{S1}) is  the spectral decomposition of $L_{G\tilde{\circ}\overrightarrow{H}}$. 
\qed\end{proof}

%
%
%
%

By Theorem \ref{eigenprojector}, we have the following result.

\begin{prop}\label{the_spectral}
Let $G$ be a connected graph with $n$ vertices, and $\overrightarrow{H}=(H_1,H_2,\ldots,H_n)$ an $n$-tuple of graphs  with $|V(H_i)|=k\geq 1$, $i=1,2,\ldots,n$. Let $u$ and $v$ be two distinct vertices of $G$. Then
\begin{align*}\label{FGH}
 &\mathbf{e}^\top_{(u,0)}\exp(-\mathrm{i}t L_{G\tilde{\circ}\overrightarrow{H} })\mathbf{e}_{(v,0)}\nonumber\\[0.2cm]
 =&e^{-\mathrm{i}t\frac{(n-1)(1+k)}{2}}\\[0.2cm]
 &\cdot\left(\sum_{\theta \in \mathrm{Spec}_L(G)\setminus\left\{0\right\}}
e^{-\mathrm{i}t\frac{\theta}{2}}(\mathbf{e}^{n}_u)^\top F_\theta(G)\mathbf{e}^{n}_v
\left(\cos\left(\frac{\Delta_\theta t}{2}\right)+
\mathrm{i}\frac{(n-1)(1-k)-\theta}{\Delta_\theta}
\sin\left(\frac{\Delta_\theta t}{2}\right)\right)\right)\\[0.2cm]
&+\frac{k}{n(k+1)}e^{-\mathrm{i}t(n-1)(1+k)}+\frac{1}{n(k+1)},
\end{align*}
where $\Delta_\theta=
     \sqrt{((n-1)(k-1)+\theta)^2+4k}$, for each Laplacian eigenvalue $\theta\neq0$ of $G$.
\end{prop}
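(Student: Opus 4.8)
The plan is to compute the matrix entry directly from the spectral decomposition~(\ref{S1}) combined with the exponentiation formula~(\ref{LSpecDec2}). Writing $\mathbf{e}_{(u,0)}$ and $\mathbf{e}_{(v,0)}$ for the characteristic vectors of the two distinguished vertices of $G$ inside $G\tilde{\circ}\overrightarrow{H}$, these vectors are supported on the first block of size $n$, so only the top-left $n\times n$ blocks of the various eigenprojectors in Theorem~\ref{eigenprojector} can contribute. In particular, $F_{n-1}$ and every $F_{n-1+\mu}$ have zero top-left block, so they drop out entirely; what survives is the contribution of $F_{\theta_{\pm}}$ for $\theta\in\mathrm{Spec}_L(G)\setminus\{0\}$, the contribution of $F_{(n-1)(1+k)}$, and the contribution of $F_0$.

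The routine bookkeeping then goes as follows. For $F_0=\frac{1}{n(k+1)}J_{n(k+1)}$ we get $\mathbf{e}_{(u,0)}^\top F_0\mathbf{e}_{(v,0)}=\frac{1}{n(k+1)}$, and since the associated eigenvalue is $0$ this contributes the constant term $\frac{1}{n(k+1)}$. For $F_{(n-1)(1+k)}$, the top-left block is $\frac{k^2}{nk(k+1)}J_n=\frac{k}{n(k+1)}J_n$, so $\mathbf{e}_{(u,0)}^\top F_{(n-1)(1+k)}\mathbf{e}_{(v,0)}=\frac{k}{n(k+1)}$, giving the term $\frac{k}{n(k+1)}e^{-\mathrm{i}t(n-1)(1+k)}$. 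For each nonzero Laplacian eigenvalue $\theta$ of $G$, the top-left block of $F_{\theta_{\pm}}$ is $\frac{(n-1-\theta_{\pm})^2}{(n-1-\theta_{\pm})^2+k}F_\theta(G)$, so the pair $\theta_{\pm}$ contributes
$$
\left(\frac{(n-1-\theta_+)^2}{(n-1-\theta_+)^2+k}e^{-\mathrm{i}t\theta_+}+\frac{(n-1-\theta_-)^2}{(n-1-\theta_-)^2+k}e^{-\mathrm{i}t\theta_-}\right)(\mathbf{e}^n_u)^\top F_\theta(G)\mathbf{e}^n_v.
$$
It then remains to simplify this bracket into the claimed form. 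Using $\theta_{\pm}=\frac{1}{2}\big((n-1)(1+k)+\theta\pm\Delta_\theta\big)$, one factors out $e^{-\mathrm{i}t\frac{(n-1)(1+k)}{2}}e^{-\mathrm{i}t\frac{\theta}{2}}$, leaving $e^{\mp\mathrm{i}t\Delta_\theta/2}$ factors that combine into $\cos(\Delta_\theta t/2)$ and $\sin(\Delta_\theta t/2)$ pieces. The coefficients $\frac{(n-1-\theta_{\pm})^2}{(n-1-\theta_{\pm})^2+k}$ must be shown to satisfy: their sum is $1$ (so they pair with the cosine with coefficient $1$), and the appropriately signed combination gives the factor $\frac{(n-1)(1-k)-\theta}{\Delta_\theta}$ multiplying $\mathrm{i}\sin(\Delta_\theta t/2)$. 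This is where I would spend the care: one needs the identities $(n-1-\theta_+)(n-1-\theta_-)=-k$ (already recorded in the proof of Theorem~\ref{eigenprojector}) and $\theta_+-\theta_-=\Delta_\theta$, together with $\theta_++\theta_-=(n-1)(1+k)+\theta$, to evaluate $(n-1-\theta_{\pm})^2+k=(n-1-\theta_{\pm})^2-(n-1-\theta_+)(n-1-\theta_-)=(n-1-\theta_{\pm})(\theta_{\mp}-\theta_{\pm})=\mp\Delta_\theta(n-1-\theta_{\pm})$, which makes the coefficients collapse to $\frac{n-1-\theta_{\pm}}{\mp\Delta_\theta}$ and hence to $\frac{\mp(n-1-\theta_{\pm})}{\Delta_\theta}$; summing gives $\frac{-(n-1-\theta_+)+(n-1-\theta_-)}{\Delta_\theta}=\frac{\theta_+-\theta_-}{\Delta_\theta}=1$, and the other combination gives $\frac{(n-1)(1-k)-\theta}{\Delta_\theta}$ after substituting $\theta_++\theta_-$.

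The main obstacle is purely the trigonometric/algebraic simplification of the $\theta_{\pm}$ bracket: keeping the signs straight when writing $\frac{(n-1-\theta_{\pm})^2}{(n-1-\theta_{\pm})^2+k}e^{-\mathrm{i}t\theta_{\pm}}$ as $\frac{\mp(n-1-\theta_{\pm})}{\Delta_\theta}e^{-\mathrm{i}t\frac{(n-1)(1+k)+\theta}{2}}e^{\mp\mathrm{i}t\Delta_\theta/2}$ and then recombining the two terms via Euler's formula. Everything else—identifying which projectors survive, reading off their top-left blocks from Theorem~\ref{eigenprojector}, and assembling the constant and $e^{-\mathrm{i}t(n-1)(1+k)}$ terms—is immediate. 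I would present the proof as: (i) apply~(\ref{LSpecDec2}) and~(\ref{S1}); (ii) note $\mathbf{e}_{(u,0)},\mathbf{e}_{(v,0)}$ annihilate the $F_{n-1}$ and $F_{n-1+\mu}$ terms; (iii) evaluate the three surviving types of terms; (iv) carry out the simplification above to reach the stated expression.
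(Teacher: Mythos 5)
Your proposal is correct and follows essentially the same route as the paper: expand $\exp(-\mathrm{i}tL_{G\tilde{\circ}\overrightarrow{H}})$ via the eigenprojectors of Theorem~\ref{eigenprojector}, observe that only $F_{\theta_{\pm}}$, $F_{(n-1)(1+k)}$ and $F_0$ have nonzero $\bigl((u,0),(v,0)\bigr)$-entries, and collapse the $\theta_{\pm}$ bracket using $(n-1-\theta_+)(n-1-\theta_-)=-k$. Your factorization $(n-1-\theta_{\pm})^2+k=\mp\Delta_\theta(n-1-\theta_{\pm})$ is a slightly slicker path to the cosine and sine coefficients than the paper's sum- and difference-of-squares identities, but the argument is the same.
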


\begin{proof}
Recall that
$\theta_{\pm}=\frac{1}{2}((n-1)(1+k)+\theta\pm\Delta_\theta)$
for each eigenvalue $\theta\neq0$ of $L_G$. By Theorem \ref{eigenprojector} and Equation (\ref{LSpecDec2}), we have
 \begin{align*}
	&\mathbf{e}^\top_{(u,0)}\exp(-\mathrm{i}t L_{G\tilde{\circ}\overrightarrow{H}})\mathbf{e}_{(v,0)}\nonumber\\[0.2cm]
=&\sum_{\theta\in \mathrm{Spec}_L(G)\setminus\left\{0\right\}}e^{-\mathrm{i}t\frac{(n-1)(1+k)+\theta}{2}}
(\mathbf{e}^{n}_u)^\top F_\theta(G)\mathbf{e}^{n}_v\left(\sum_{\pm}
e^{\mp\mathrm{i}\frac{\Delta_\theta t}{2}}\frac{(n-1-\theta_{\pm})^2}{(n-1-\theta_{\pm})^2+k}\right)\\[0.2cm]
&+\frac{k}{n(k+1)}e^{-\mathrm{i}t(n-1)(1+k)}+\frac{1}{n(k+1)}.
 \end{align*}
Note that
$$(n-1-\theta_+)(n-1-\theta_-)=-k,$$
$$(n-1-\theta_+)^2+(n-1-\theta_-)^2=(\theta+(n-1)(k-1))^{2}+2k,$$
and
$$(n-1-\theta_-)^2-(n-1-\theta_+)^2=\Delta_\theta((n-1)(1-k)-\theta).$$
Thus
\begin{eqnarray*}
\left(\sum_{\pm}
e^{\mp\mathrm{i}\frac{\Delta_\theta t}{2}}\frac{(n-1-\theta_{\pm})^2}{(n-1-\theta_{\pm})^2+k}\right)
=\cos\left(\frac{\Delta_\theta t}{2}\right)+
\mathrm{i}\frac{(n-1)(1-k)-\theta}{\Delta_\theta}
\sin\left(\frac{\Delta_\theta t}{2}\right).
 \end{eqnarray*}
This completes the proof. \qed
\end{proof}
%
%
%
%
%
%
\section{LPST in vertex complemented coronas}
\label{Sec:main}

In this section, we explore the conditions under which the vertex complemented corona $G\tilde{\circ}\overrightarrow{H}$ can have LPST. We use the label of the vertex set of $G\tilde{\circ}\overrightarrow{H}$ as in (\ref{www}). For a vertex $(v_{j},w)$ with $v_j\in V(G)$ and $w\in V(H_{l})$,  for $j=1, \ldots, n$,  denote by
 \begin{equation}\label{VecLabel-1}
\mathbf{e}_{(v_{j},w)}:=\left(\begin{array}{c}
\mathbf{0}_{n\times 1} \\[0.2cm]
\mathbf{e}_j^n\otimes\mathbf{e}_w^{k}
\end{array}
\right)
 \end{equation}
the unit vector of size $n(k+1)$ with the $(n+(j-1)k+w)$-th entry equal to $1$. For a vertex $(v,0)$ with $v\in V(G)$,  denote by
 \begin{equation}\label{VecLabel-2}
\mathbf{e}_{(v,0)}:=\mathbf{e}_v^{n(k+1)}
 \end{equation}
the unit vector of size $n(k+1)$ with the $v$-th entry equal to $1$.

\begin{lemma}
\label{spectralF}
Let $G$ be a connected graph with $n$ vertices. Suppose that $F_{\theta_0}(G),\ldots,F_{\theta_p}(G)$ are the eigenprojectors corresponding to Laplacian eigenvalues $0=\theta_0,\ldots,\theta_p$ of $G$. Define $M=J_n-I_n$. Then for any $j\in\{1,2,\ldots,n\}$,  there exists some $\theta\in\{\theta_1,\ldots,\theta_p\}$ such that $(F_\theta(G) M)\mathbf{e}^{n}_{j}\neq\mathbf{0}$.
\end{lemma}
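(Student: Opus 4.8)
The plan is to reduce the statement to a single explicit computation, exploiting the completeness relation $\sum_{r=0}^{p}F_{\theta_r}(G)=I_n$ together with the fact that, since $G$ is connected, the eigenprojector onto $\ker L_G$ is $F_{\theta_0}(G)=\frac1n J_n$. We may assume $n\geq 2$: if $n=1$ then $G$ has no nonzero Laplacian eigenvalue and the assertion is vacuous, while in every application of this lemma $G$ has at least two vertices.

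First I would record that $M\mathbf{e}^{n}_{j}=(J_n-I_n)\mathbf{e}^{n}_{j}=\mathbf{j}_n-\mathbf{e}^{n}_{j}$, so it suffices to show that $F_\theta(G)(\mathbf{j}_n-\mathbf{e}^{n}_{j})\neq\mathbf{0}$ for some $\theta\in\{\theta_1,\dots,\theta_p\}$. Next, from $\sum_{r=0}^{p}F_{\theta_r}(G)=I_n$ and $F_{\theta_0}(G)=\frac1n J_n$ one obtains $\sum_{r=1}^{p}F_{\theta_r}(G)=I_n-\frac1n J_n$. Applying this operator to $\mathbf{j}_n-\mathbf{e}^{n}_{j}$ and using $J_n\mathbf{j}_n=n\mathbf{j}_n$ and $J_n\mathbf{e}^{n}_{j}=\mathbf{j}_n$, a short computation gives
$$\sum_{r=1}^{p}F_{\theta_r}(G)\bigl(\mathbf{j}_n-\mathbf{e}^{n}_{j}\bigr)=\Bigl(I_n-\tfrac1n J_n\Bigr)\bigl(\mathbf{j}_n-\mathbf{e}^{n}_{j}\bigr)=\tfrac1n\mathbf{j}_n-\mathbf{e}^{n}_{j},$$
and the right-hand side is nonzero because its $j$-th coordinate equals $\frac1n-1=\frac{1-n}{n}\neq 0$.

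Finally, since the sum $\sum_{r=1}^{p}F_{\theta_r}(G)(\mathbf{j}_n-\mathbf{e}^{n}_{j})$ is nonzero, at least one summand $F_{\theta_r}(G)(\mathbf{j}_n-\mathbf{e}^{n}_{j})$ must be nonzero for some $r\in\{1,\dots,p\}$; taking $\theta=\theta_r$ and recalling $M\mathbf{e}^{n}_{j}=\mathbf{j}_n-\mathbf{e}^{n}_{j}$ yields $(F_\theta(G)M)\mathbf{e}^{n}_{j}\neq\mathbf{0}$, which is the desired conclusion. There is no genuine obstacle in this argument; the only point that requires care is invoking the connectivity of $G$ to identify $F_{\theta_0}(G)=\frac1n J_n$ exactly, which is precisely what forces $\tfrac1n\mathbf{j}_n-\mathbf{e}^{n}_{j}$ to be nonzero.
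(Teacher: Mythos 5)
Your proof is correct and follows essentially the same route as the paper's: both use $\sum_{r=1}^{p}F_{\theta_r}(G)=I_n-\frac1n J_n$ and observe that applying it to $M\mathbf{e}^n_j$ gives $\frac1n\mathbf{j}_n-\mathbf{e}^n_j\neq\mathbf{0}$, forcing some individual term to be nonzero. The only difference is cosmetic (you apply $M$ to $\mathbf{e}^n_j$ first and note the $n\ge 2$ caveat explicitly), so there is nothing further to add.
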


\begin{proof}
Note that $\sum\limits_{r=0}^pF_{\theta_r}(G)=I_n$ and $F_{0}(G)=\frac{1}{n}J_{n}$. Then
$$\sum_{r=1}^pF_{\theta_r}(G)=I_n-F_{0}(G)=I_n-\frac{1}{n}J_{n},$$
which implies that $\left(\sum\limits_{r=1}^pF_{\theta_r}(G)M\right)\mathbf{e}^{n}_{j} =(\frac{1}{n}J_{n}-I_n)\mathbf{e}^{n}_{j}\neq \mathbf{0}$, that is, there exists some $\theta\in\{\theta_1,\ldots,\theta_p\}$ such that $(F_\theta(G) M)\mathbf{e}^{n}_{j}\neq\mathbf{0}$. \qed
\end{proof}

\begin{theorem}
\label{nonbip1}
 Let $G$ be a connected graph with $n$ vertices, and $\overrightarrow{H}=(H_1,\ldots,H_n)$ an $n$-tuple of graphs with $|V(H_i)|=k\geq 1$, $i=1,2,\ldots,n$. Then there is no LPST in $G\tilde{\circ}\overrightarrow{H}$.
\end{theorem}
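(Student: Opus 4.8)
The plan is to contradict Theorem \ref{Coutinho} by showing that its condition (b) fails for \emph{every} pair of distinct vertices of $G\tilde{\circ}\overrightarrow{H}$; equivalently, that the Laplacian eigenvalue support of \emph{every} vertex of $G\tilde{\circ}\overrightarrow{H}$ contains a non-integer eigenvalue. Fix a vertex $u$ and, using the labelling (\ref{www}), write $u=(v_a,w)$ with $a\in\{1,\ldots,n\}$ and either $w=0$ or $w\in V(H_a)$. Since $G$ is connected, Lemma \ref{spectralF} produces a nonzero Laplacian eigenvalue $\theta$ of $G$ with $(F_\theta(G)M)\mathbf{e}^n_a\neq\mathbf{0}$; as $F_\theta(G)\mathbf{j}_n=\mathbf{0}$ for $\theta\neq0$ and $M\mathbf{e}^n_a=\mathbf{j}_n-\mathbf{e}^n_a$, this is the same as $F_\theta(G)\mathbf{e}^n_a\neq\mathbf{0}$. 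Associated with $\theta$ by Theorem \ref{Hou1}(c) are the two Laplacian eigenvalues of $G\tilde{\circ}\overrightarrow{H}$
$$
\theta_{\pm}=\tfrac12\!\left((n-1)(1+k)+\theta\pm\Delta_\theta\right),\qquad
\Delta_\theta=\sqrt{\bigl((n-1)(k-1)+\theta\bigr)^{2}+4k}\,,
$$
and the argument rests on two facts: (i) both $\theta_{+}$ and $\theta_{-}$ lie in $\mathrm{supp}_{L_{G\tilde{\circ}\overrightarrow{H}}}(u)$; (ii) at least one of $\theta_{+},\theta_{-}$ is not an integer. Together these make condition (b) fail, and since $u$ is arbitrary this proves the theorem.

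For (i) I would substitute $\mathbf{e}_u$, in the form (\ref{VecLabel-1}) or (\ref{VecLabel-2}), into the eigenprojector (\ref{F14}) of Theorem \ref{eigenprojector}(c). A block computation shows that the first $n$ coordinates of $F_{\theta_{\pm}}\mathbf{e}_u$ are a scalar multiple of $F_\theta(G)\mathbf{e}^n_a$ (using $F_\theta(G)M\mathbf{e}^n_a=-F_\theta(G)\mathbf{e}^n_a$ when $w\neq0$), and the scalar is nonzero because $\theta_{\pm}\neq n-1$. The latter holds since $\Delta_\theta\ge 2\sqrt{k}>0$: if $\theta_{+}=n-1$ then $\theta_{-}=(n-1)k+\theta>n-1$, contradicting $\theta_{+}>\theta_{-}$; if $\theta_{-}=n-1$ then $\Delta_\theta=(n-1)(k-1)+\theta$, forcing $4k=0$. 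Hence $F_{\theta_{\pm}}\mathbf{e}_u\neq\mathbf{0}$, and because the eigenspaces occurring in Theorem \ref{eigenprojector} are mutually orthogonal (Claim 2 of its proof), this persists even if $\theta_{+}$ or $\theta_{-}$ coincides numerically with another listed eigenvalue: the true spectral projector is then a sum of mutually orthogonal projectors one of whose summands already moves $\mathbf{e}_u$. Thus $\theta_{+},\theta_{-}\in\mathrm{supp}_{L_{G\tilde{\circ}\overrightarrow{H}}}(u)$.

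The heart of the matter is (ii). If $\theta\notin\mathbb{Z}$, then $\theta_{+}+\theta_{-}=(n-1)(1+k)+\theta\notin\mathbb{Z}$, so $\theta_{+},\theta_{-}$ cannot both be integers. If $\theta\in\mathbb{Z}$, then $\theta\ge1$ and $c:=(n-1)(k-1)+\theta$ is a positive integer with $c\ge k$ (from $n\ge2$, $k\ge1$, $\theta\ge1$). If both $\theta_{\pm}$ were integers, $\Delta_\theta=\theta_{+}-\theta_{-}$ would be a positive integer satisfying $(\Delta_\theta-c)(\Delta_\theta+c)=\Delta_\theta^{2}-c^{2}=4k$; since $\Delta_\theta>c$ gives $\Delta_\theta+c\ge 2k+1$, this forces $\Delta_\theta-c=1$ and hence $\Delta_\theta+c=4k$, so $2\Delta_\theta=4k+1$ is odd --- a contradiction. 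Therefore one of $\theta_{+},\theta_{-}$ is not an integer.

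I expect step (ii) to be the only genuinely delicate point --- in particular confirming the bound $c\ge k$ in the boundary cases $k=1$ and $n=2$, which is exactly what powers the parity contradiction $2\Delta_\theta=4k+1$. Everything else is routine: step (i) is a direct substitution into the formulas of Theorem \ref{eigenprojector}, and the reduction to Theorem \ref{Coutinho}(b) is immediate. Combining (i) and (ii), the Laplacian eigenvalue support of every vertex of $G\tilde{\circ}\overrightarrow{H}$ contains a non-integer eigenvalue, so no pair of distinct vertices can satisfy condition (b) of Theorem \ref{Coutinho}, and hence $G\tilde{\circ}\overrightarrow{H}$ has no LPST.
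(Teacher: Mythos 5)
Your proposal is correct and follows essentially the same route as the paper: reduce to condition (b) of Theorem \ref{Coutinho}, show via the eigenprojectors of Theorem \ref{eigenprojector}(c) (and Lemma \ref{spectralF}) that both $\theta_{\pm}$ lie in the Laplacian eigenvalue support of every vertex, and then show $\Delta_\theta$ cannot be an integer. Your integrality argument is a mirror image of the paper's --- you use the size bound to force $\Delta_\theta-c=1$ and then get a parity contradiction, whereas the paper uses parity to force $\Delta_\theta-c=2a$ and then gets a sign contradiction $\theta\le 0$ --- and your explicit handling of possible coincidences among the listed eigenvalues is a welcome extra care, but the substance is the same.
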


\begin{proof}
Let $(v_{j},w)$ be a vertex of $G\tilde{\circ}\overrightarrow{H}$. By Theorem \ref{Coutinho}, in order to prove that $G\tilde{\circ}\overrightarrow{H}$ does not have LPST,  it suffices to prove that there exists a non-integer eigenvalue in the Laplacian eigenvalue support of $(v_{j},w)$.  Consider the following cases.

\noindent\emph{Case 1.} $w=0$. Since $G$ is connected on at least two vertices, there exists a positive Laplacian eigenvalue $\theta$ in the Laplacian eigenvalue support of $v_{j}$, that is, $F_\theta(G)\mathbf{e}_{j}^{n}\neq \mathbf{0}$, where $F_{\theta}(G)$ is the eigenprojector corresponding to the Laplacian eigenvalue $\theta$ of $G$.
Then, by Theorem \ref{eigenprojector} (c) and (\ref{VecLabel-2}), we have $F_{\theta_\pm } \mathbf{e}_{(v_{j},0)}\neq \mathbf{0}$, which means that both
$$\theta_{\pm}=\frac{1}{2}\left((n-1)(1+k)+\theta\pm\sqrt{((n-1)(k-1)+\theta)^2+4k}\right)$$
are in the Laplacian eigenvalue support of $(v_{j},0)$. Suppose, towards contradiction, that both $\theta_{\pm}$ are integers. Then both
$$(n-1)(1+k)+\theta=\theta_++\theta_-$$
and
$$\sqrt{((n-1)(k-1)+\theta)^2+4k}=\theta_+-\theta_-$$
are integers, implying that $\theta$ is an integer, and $((n-1)(k-1)+\theta)^2+4k$ is a perfect square. Note that $4k$ is even. Then the parity of the square $((n-1)(k-1)+\theta)^2+4k$ must be the same as $((n-1)(k-1)+\theta)^2$. Since $4k\geq4$, we assume that
$$((n-1)(k-1)+\theta)^2+4k= ((n-1)(k-1)+\theta+2a)^2,~(a\geq 1),$$
that is,
$$\theta= -a+\frac{k}{a}-(n-1)(k-1).$$
Then, if $a=1$, $\theta=(k-1)(2-n)\leq0$. Since the function $\theta=-a+\frac{k}{a}-(n-1)(k-1)$ decreases as $a$ decreases, then $\theta\leq0$ if $a\geq1$, a contradiction to $\theta>0$. Thus, one of $\theta_{\pm}$ is not an integer.

\noindent\emph{Case 2.} $w\neq 0$. Lemma \ref{spectralF} implies that, for any $j\in \{1,\ldots,n\}$, there exists some Laplacian eigenvalue $\theta\neq0$ such that $(F_\theta(G) M)\mathbf{e}^{n}_{j}\neq \mathbf{0}$. Then, by Proposition \ref{eigenprojector} (c) and (\ref{VecLabel-1}), we have
\begin{align*}
F_{\theta_\pm }\mathbf{e}_{(v_{j},w)} &=\frac{(n-1-\theta_{\pm})^{2}}{(n-1-\theta_{\pm})^{2}+k}
 \left(
	\begin{array}{cc}
	\frac{1}{n-1-\theta_{\pm}} \left(\left(F_{\theta}(G)M\right)\otimes \mathbf{j}_{k}^\top\right)(\mathbf{e}_j^n\otimes\mathbf{e}_w^{k})\\ [0.3cm]
	 \frac{1}{(n-1-\theta_{\pm})^2}\left(\left(M^\top F_{\theta}(G)M\right) \otimes J_{k}\right)(\mathbf{e}_j^n\otimes\mathbf{e}_w^{k})
	\end{array}
	\right)\\[0.2cm]
&=\frac{(n-1-\theta_{\pm})^{2}}{(n-1-\theta_{\pm})^{2}+k}
 \left(
	\begin{array}{cc}
	\frac{1}{n-1-\theta_{\pm}} (F_{\theta}(G)M)\mathbf{e}_j^n\otimes (\mathbf{j}_{k}^\top\mathbf{e}_w^{k})\\ [0.3cm]
	 \frac{1}{(n-1-\theta_{\pm})^2}(M^\top F_{\theta}(G)M)\mathbf{e}_j^n \otimes (J_{k}\mathbf{e}_w^{k})
	\end{array}
	\right)\\[0.2cm]
 &\neq\mathbf{0},
\end{align*}
which means that both
$$\theta_{\pm}=\frac{1}{2}\left((n-1)(1+k)+\theta\pm\sqrt{((n-1)(k-1)+\theta)^2+4k}\right)$$
are in the Laplacian eigenvalue support of $(v_{j},w)$. By the same argument as in Case 1, one of $\theta_{\pm}$ is not an integer.

This completes the proof. \qed
\end{proof}

\section{LPGST in vertex complemented coronas}
\label{Sec:main2}

In this section, we study the existence of LPGST in vertex complemented coronas.

\begin{theorem}
\label{pregoo-2}
Let $G$ be a connected graph with $n$ vertices and $\overrightarrow{H}=(H_1,\ldots,H_n)$ an $n$-tuple of graphs with $|V(H_i)|=k\geq 1$, $i=1,2,\ldots,n$. Suppose that $G$ has LPST between vertices $u$ and $v$, and let $2^e$ be the greatest power of two dividing each element of the Laplacian eigenvalue support of $u$. If $2^{e+1}$ divides $(n-1)(1+k)$, then $G\tilde{\circ}\overrightarrow{H}$ has LPGST between vertices $(u,0)$ and $(v,0)$.
\end{theorem}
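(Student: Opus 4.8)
The plan is to work entirely from the closed form of the relevant amplitude supplied by Proposition~\ref{the_spectral}. Write $\Phi(t):=\mathbf{e}_{(u,0)}^{\top}\exp(-\mathrm{i}tL_{G\tilde{\circ}\overrightarrow{H}})\mathbf{e}_{(v,0)}$ and $N:=(n-1)(1+k)$. Then $\Phi(t)=e^{-\mathrm{i}tN/2}\sum_{\theta\in\mathrm{Spec}_L(G)\setminus\{0\}}e^{-\mathrm{i}t\theta/2}\big(\mathbf{e}_u^{\top}F_\theta(G)\mathbf{e}_v\big)B_\theta(t)+\frac{k}{n(k+1)}e^{-\mathrm{i}tN}+\frac{1}{n(k+1)}$, where $B_\theta(t)=\cos(\Delta_\theta t/2)+\mathrm{i}\,\frac{(n-1)(1-k)-\theta}{\Delta_\theta}\sin(\Delta_\theta t/2)$. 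Since $G$ has LPST between $u$ and $v$, Theorem~\ref{Coutinho} gives that $u,v$ are strongly Laplacian cospectral, so only $\theta$ in $S:=\mathrm{supp}_{L_G}(u)$ contribute and (using cospectrality and idempotence of the $F_\theta(G)$) $\mathbf{e}_u^{\top}F_\theta(G)\mathbf{e}_v=\sigma_\theta m_\theta$ with $m_\theta:=\|F_\theta(G)\mathbf{e}_u\|^2>0$ and $\sigma_\theta=(-1)^{\theta/g}$, where $g:=\gcd(S)$; moreover each $\theta\in S$ is an integer. I will use $\sum_{\theta\in S}m_\theta=1$ and $m_0=\frac{1}{n}$, so $\sum_{\theta\in S\setminus\{0\}}m_\theta=1-\frac{1}{n}$. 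The goal is, for each $\varepsilon>0$, to exhibit a time $t$ with $|\Phi(t)|\ge 1-\varepsilon$.

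First I would fix the ``arithmetic part'' of $t$ using the hypothesis. Write $g=2^{e}g'$ with $g'$ odd (this agrees with the definition of $e$) and set $d:=\gcd(N,g)$. Since $2^{e}$ is the exact power of two dividing $g$ while $2^{e+1}\mid N$, the integer $d$ also has $2$-part exactly $2^{e}$; hence $g/d$ is odd and $N/d$ is even. For an arbitrary integer $l$, put $t_l:=\frac{2\pi(2l+1)}{d}$. Then $t_l/2=\frac{\pi}{g}\cdot\frac{g}{d}(2l+1)$ is an odd multiple of $\pi/g$, and using $\theta/g\in\mathbb{Z}$, the oddness of $\frac{g}{d}(2l+1)$, and $\sigma_\theta=(-1)^{\theta/g}$, one checks $e^{-\mathrm{i}t_l\theta/2}\,\mathbf{e}_u^{\top}F_\theta(G)\mathbf{e}_v=\sigma_\theta^{2}m_\theta=m_\theta$ for every $\theta\in S$. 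Also $t_lN/2=\pi(2l+1)(N/d)$ is an even multiple of $\pi$ and $t_lN$ an even multiple of $2\pi$, so $e^{-\mathrm{i}t_lN/2}=e^{-\mathrm{i}t_lN}=1$. Substituting into $\Phi$ makes it collapse to
\[
\Phi(t_l)=\sum_{\theta\in S\setminus\{0\}}m_\theta B_\theta(t_l)+\frac{1}{n}.
\]
Since $\Delta_\theta^{2}=((n-1)(1-k)-\theta)^{2}+4k$ with $k\ge 1$, the coefficient $\big|\frac{(n-1)(1-k)-\theta}{\Delta_\theta}\big|<1$, so $|B_\theta(t_l)|\le 1$ and $|\Phi(t_l)-1|\le\sum_{\theta\in S\setminus\{0\}}m_\theta|B_\theta(t_l)-1|$. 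Thus it remains to choose $l$ so that $B_\theta(t_l)\approx 1$, equivalently $\Delta_\theta t_l/2\approx 0\pmod{2\pi}$, simultaneously for all $\theta\in S\setminus\{0\}$.

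Next I would handle these irrational conditions with Kronecker's theorem. For $\theta\in S\setminus\{0\}$, $\Delta_\theta^{2}=((n-1)(k-1)+\theta)^{2}+4k$ is a positive integer that is not a perfect square --- this is exactly the computation already carried out in the proof of Theorem~\ref{nonbip1}, valid because $\theta$ is a positive integer --- so write $\Delta_\theta^{2}=(c'_\theta)^{2}\delta_\theta$ with $\delta_\theta>1$ square-free and $c'_\theta\in\mathbb{Z}_{>0}$, and let $\delta_1,\dots,\delta_r$ be the distinct values of the $\delta_\theta$. By Corollary~\ref{Ri1}, $\{1,\sqrt{\delta_1},\dots,\sqrt{\delta_r}\}$ --- and hence $\{1,\sqrt{\delta_1}/d,\dots,\sqrt{\delta_r}/d\}$ --- is linearly independent over $\mathbb{Q}$. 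Applying Theorem~\ref{H-W} with $\lambda_i=\sqrt{\delta_i}/d$ and $\alpha_i=-\sqrt{\delta_i}/(2d)$, for any $\varepsilon'>0$ there is an integer $l$ with $\frac{\sqrt{\delta_i}(2l+1)}{2d}\approx 0\pmod 1$ for every $i$, so $\frac{\Delta_\theta t_l}{4\pi}=c'_\theta\cdot\frac{\sqrt{\delta_\theta}(2l+1)}{2d}\approx 0\pmod 1$ for every $\theta\in S\setminus\{0\}$, with error at most $(\max_\theta c'_\theta)\varepsilon'$. Then $\cos(\Delta_\theta t_l/2)\approx 1$ and $\sin(\Delta_\theta t_l/2)\approx 0$, so $B_\theta(t_l)\approx 1$; choosing $\varepsilon'$ small enough in terms of $\varepsilon$ and $\max_\theta c'_\theta$ gives $|\Phi(t_l)-1|<\varepsilon$, hence $|\Phi(t_l)|>1-\varepsilon$ (and since $|\Phi(-t)|=|\Phi(t)|$ one may take the time positive). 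This yields the desired LPGST between $(u,0)$ and $(v,0)$.

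The hard part will be the $2$-adic bookkeeping in the second step: one must verify that the hypothesis $2^{e+1}\mid(n-1)(1+k)$ is precisely what forces $\gcd(N,g)$ to share the $2$-part of $g$, so that a single progression $t_l=2\pi(2l+1)/d$ can simultaneously be an odd multiple of the LPST time $\pi/g$ of $G$ \emph{and} annihilate both phases $e^{-\mathrm{i}t_lN/2}$ and $e^{-\mathrm{i}t_lN}$ (if $2^{e+1}\nmid N$ this double constraint becomes inconsistent). The other delicate point is that the $\Delta_\theta$ need not be $\mathbb{Q}$-independent, which is why Kronecker's theorem must be fed the square-free radicands $\sqrt{\delta_i}$ (justified by Corollary~\ref{Ri1}) rather than the $\Delta_\theta$ themselves, with the integer factors $c'_\theta$ absorbed into the required precision; everything remaining reduces to the routine estimates sketched above.
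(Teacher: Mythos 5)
Your proposal is correct and follows essentially the same route as the paper: it reduces to the amplitude formula of Proposition~\ref{the_spectral}, kills the integer phases $e^{-\mathrm{i}tN/2}$, $e^{-\mathrm{i}tN}$ and the LPST sum exactly by choosing $t$ in an arithmetic progression of odd multiples of the LPST time (your $t_l=2\pi(2l+1)/\gcd(N,g)$ plays the role of the paper's $T=(4\kappa+2^{1-e})\pi$, with the same $2$-adic use of the hypothesis $2^{e+1}\mid N$), and then handles the irrational phases $\Delta_\theta t/2$ via the square-free decomposition, Corollary~\ref{Ri1} and Kronecker's theorem with the same half-integer shift. Your explicit sign bookkeeping $\sigma_\theta=(-1)^{\theta/g}$ and the weighted triangle-inequality error estimate are slightly more detailed than the paper's, but the argument is the same.
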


\begin{proof}
Let $S=\mathrm{{supp}}_{L_G}(u)$ be the Laplacian eigenvalue support of $u$ in $G$. Then $F_\theta(G)\mathbf{e}_u^{n}=\mathbf{0}$ for all Laplacian eigenvalues $\theta \notin S$ of $G$, which implies that $(\mathbf{e}^{n}_u)^\top F_\theta(G)\mathbf{e}^{n}_v=\mathbf{0}$ for all $\theta \notin S$. By Proposition \ref{the_spectral}, we have
 \begin{align*}
 &\mathbf{e}^\top_{(u,0)}\exp(-\mathrm{i}t L_{G\tilde{\circ}\overrightarrow{H} })\mathbf{e}_{(v,0)}\nonumber\\[0.2cm]
 =&e^{-\mathrm{i}t\frac{(n-1)(1+k)}{2}}\left(\sum_{\theta \in S\setminus\left\{0\right\}}
e^{-\mathrm{i}t\frac{\theta}{2}}(\mathbf{e}^{n}_u)^\top F_\theta(G)\mathbf{e}^{n}_v
\left(\cos\left(\frac{\Delta_\theta t}{2}\right)+
\mathrm{i}\frac{(n-1)(1-k)-\theta}{\Delta_\theta}
\sin\left(\frac{\Delta_\theta t}{2}\right)\right)\right)\\[0.2cm]
&+\frac{k}{n(k+1)}e^{-\mathrm{i}t(n-1)(1+k)}+\frac{1}{n(k+1)}.
\end{align*}
In order to prove that $G\tilde{\circ}\overrightarrow{H}$ has LPGST from $(u,0)$ to $(v,0)$, we first prove that there exists a time $T$ such that
\begin{equation}\label{spec1}
	e^{-\mathrm{i}\frac{(n-1)(1+k)T}{2}}=1,
	\end{equation}
\begin{equation}\label{spec2}
	\left|\sum_{\theta \in S}
e^{-\mathrm{i}\frac{\theta T}{2}}(\mathbf{e}^{n}_u)^\top F_\theta(G)\mathbf{e}^{n}_v\right|=1,
	\end{equation}
and
\begin{equation}\label{spec3}
	\cos\left(\frac{\Delta_\theta T}{2}\right)\approx 1,
	\end{equation}
where $\Delta_\theta=\sqrt{((n-1)(k-1)+\theta)^2+4k}$, for each Laplacian eigenvalue $\theta\in S$ of $G$.

Note that, by Theorem \ref{Coutinho}, all eigenvalues in $S$ are integers. Consider a positive integer $\theta\in S\setminus\left\{0\right\}$. By the proof of Theorem \ref{nonbip1}, we obtain that $\Delta_\theta$ is irrational. Set $\Delta_\theta:=a_\theta\sqrt{b_\theta}$, where $a_\theta, b_\theta \in \mathbb{N}$ and  $b_\theta$ is the square-free part of $\Delta^2_\theta$. By Corollary \ref{Ri1}, the disjoint union $$\{1\}\cup \left\{\sqrt{b_\theta}:~\theta\in S, ~ \theta>0\right\}$$
     is linearly independent over $\mathbb{Q}$. By Theorem \ref{H-W}, there exist integers $d_\theta$ for each $\theta \in S$, and an integer $\kappa$ such that
     \begin{equation}\label{spec4}
	\kappa\sqrt{b_\theta}-d_\theta\approx -\frac{\sqrt{b_\theta}}{2^{e+1}}.
	\end{equation}
If $b_\theta=b_\nu$ for two distinct eigenvalues $\theta$ and $\nu$ in the support of $u$, then $d_\theta=d_\nu$. Multiplying both sides of (\ref{spec4}) by $4a_\theta$ yields
$$(4\kappa+2^{1-e})\Delta_\theta\approx 4a_\theta d_\theta. $$
Thus, if $T=(4\kappa+2^{1-e})\pi$, then
$$
\cos\left(\frac{\Delta_\theta T}{2}\right)\approx 1.
$$
So, we can find $T=(4\kappa+2^{1-e})\pi$ such that $(\ref{spec3})$ is valid.

Recall that $G$ has LPST between vertices $u$ and $v$, and recall also that $2^e$ is the greatest power of two dividing each element of $S$. By Theorem \ref{Coutinho}, $G$ has LPST at time $\frac{\pi}{2^e}$.
Then
$$
\left|\sum_{\theta \in S}
e^{-\mathrm{i}\frac{\theta T}{2}}
(\mathbf{e}^{n}_u)^\top F_\theta(G)\mathbf{e}^{n}_v\right|=\left|\sum_{\theta \in \mathrm{Spec}_L(G)}
e^{-\mathrm{i}\theta\frac{\pi}{2^e}}
(\mathbf{e}^{n}_u)^\top F_\theta(G)\mathbf{e}^{n}_v\right|=1,
$$
which implies that (\ref{spec2}) is true if $T=(4\kappa+2^{1-e})\pi$.

Note that  $2^{e+1}$ divides $(n-1)(1+k)$. Then
\begin{align*}
  \exp\left(-\mathrm{i}\frac{(n-1)(1+k)T}{2}\right) &=\exp\left(-\mathrm{i}\frac{(n-1)(1+k))(4\kappa+2^{1-e})\pi}{2}\right) \\
  &=\exp\left(-\mathrm{i}\frac{(n-1)(1+k)(4\kappa2^e+2)\pi}{2^{e+1}}\right)\\
  &=1,
\end{align*}
that is, (\ref{spec1})  is right if $T=(4\kappa+2^{1-e})\pi$.

Thus, (\ref{spec1}), (\ref{spec2}) and (\ref{spec3}) are valid if $T=(4\kappa+2^{1-e})\pi$, and then we have
 \begin{align*}
 &\left|\mathbf{e}^\top_{(u,0)}\exp(-\mathrm{i}T L_{G\widetilde{\circ}\overrightarrow{H} })\mathbf{e}_{(v,0)}\right|\\
 \approx & \left|\left(\sum_{\theta \in S\setminus\{0\}}
e^{-\mathrm{i}\frac{\theta}{2}T}(\mathbf{e}^{n}_u)^\top F_\theta(G)\mathbf{e}^{n}_v
\right)+\frac{k}{n(k+1)}e^{-\mathrm{i}T(n-1)(1+k)}+\frac{1}{n(k+1)}\right|\\
=& \left|\left(\sum_{\theta \in S\setminus\{0\}}
e^{-\mathrm{i}\frac{\theta}{2}T}(\mathbf{e}^{n}_u)^\top F_\theta(G)\mathbf{e}^{n}_v\right)
+\frac{k}{n(k+1)}+\frac{1}{n(k+1)}\right|\\
=&  \left|\sum_{\theta \in S\setminus\{0\} }
e^{-\mathrm{i}\frac{\theta}{2}T}(\mathbf{e}^{n}_u)^\top F_\theta(G)\mathbf{e}^{n}_v+\frac{1}{n}\right|\\
=&  \left|\sum_{\theta \in S\setminus\{0\} }
e^{-\mathrm{i}\frac{\theta}{2}T}(\mathbf{e}^{n}_u)^\top F_\theta(G)\mathbf{e}^{n}_v+e^{-\mathrm{i}\frac{0}{2}T}(\mathbf{e}^{n}_u)^\top F_0(G)\mathbf{e}^{n}_v\right|\\
= &\left|\sum_{\theta \in S}
e^{-\mathrm{i}\theta\frac{\pi}{2^e}}
(\mathbf{e}^{n}_u)^\top F_\theta(G)\mathbf{e}^{n}_v\right|\\
=& 1.
\end{align*}
Therefore, if $T=(4\kappa+2^{1-e})\pi$, $G\tilde{\circ}\overrightarrow{H}$ has LPGST between vertices $(u,0)$ and $(v,0)$.
\qed
\end{proof}

\begin{example}
\label{pregol}
{\em Let $K_2$ denote the path on $2$ vertices. Label the vertices of $K_2$ by $u$ and $v$. Note \cite{Cohl} that $K_2$ has LPST between $u$ and $v$. It is easy to verify that the Laplacian eigenvalues of $K_{2}$ are $0$ and $2$, which are both in the Laplacian eigenvalue support of $u$. Thus, $2^e=2$. Let $\overrightarrow{H}=(H_1, H_2)$ an $2$-tuple of graphs with $|V(H_i)|=k\geq 1$, $i=1,2$. If  $2^{e+1}\mid (k+1)$, that is, $k \equiv 3~(\mod~4)$, then by Theorem \ref{pregoo-2}, $K_{2}\tilde{\circ}\overrightarrow{H}$ has LPGST  between $(u,0)$ and $(v,0)$.

In particular, if $H_1$ and $H_2$ are two edgeless graphs with $k$ vertices, where $k \equiv 3~(\mod~4)$, then $K_{2}\tilde{\circ}\overrightarrow{H}$ is a tree, which has LPGST. Here, we give an example on trees having LPGST, keeping in mind \cite{Cohl} that every tree with at least three vertices has no LPST.

}
\end{example}

%
%

\end{document}